\renewcommand{\theequation}{\thesection\arabic{equation}}
\newtheorem{theorem}{Theorem}
\newtheorem{lemma}{Lemma}
\newtheorem{condition}{Condition}
\theoremstyle{definition}
\def\hatEps{\hat{\mathcal{E}}}
\def\Eps{{\mathcal{E}}}
\def\cov{{\text{cov}}}
\begin{document}


\renewcommand{\baselinestretch}{2}

\markright{ \hbox{\footnotesize\rm  
}\hfill\\[-30pt]
\hbox{\footnotesize\rm Statistica Sinica
}\hfill }

\markboth{\hfill{\footnotesize\rm Zhendong Huang AND Davide Ferrari} \hfill}
{\hfill {\footnotesize\rm Fast Construction of Composite Likelihoods} \hfill}

\renewcommand{\thefootnote}{}
$\ $\par


\fontsize{12}{14pt plus.8pt minus .6pt}\selectfont \vspace{0.8pc}
\centerline{\large\bf Fast Construction of Optimal Composite Likelihoods}
\vspace{.4cm} 
\centerline{Zhendong Huang$^{\dagger}$ and Davide Ferrari$^{\ast}$} 
\vspace{.4cm} 
\centerline{\it $^{\dagger}$University of Melbourne, $^{\ast}$University of Bolzano }
 \vspace{.55cm} \fontsize{9}{11.5pt plus.8pt minus.6pt}\selectfont


\begin{quotation}
\noindent {\it Abstract:}
A composite likelihood is a combination of low-dimensional likelihood objects useful  in applications where the data have complex structure.   Although composite likelihood construction is a crucial aspect influencing both computing and statistical properties of the resulting estimator, currently there does not seem to exist a universal rule to combine low-dimensional likelihood objects that is statistically justified and fast in execution.  This paper develops a methodology  to select and combine the most informative low-dimensional likelihoods from a large set of candidates while carrying out parameter estimation. The new procedure minimizes the distance between composite likelihood and  full likelihood scores subject to a constraint representing  the afforded computing
cost. The selected composite likelihood is sparse in the sense that it contains a relatively  small number of informative sub-likelihoods while the noisy terms are dropped.  The resulting estimator is found to have asymptotic variance close to that of the  minimum-variance estimator constructed using all the low-dimensional likelihoods.

\vspace{9pt}
\noindent {\it Key words and phrases:}
Composite likelihood estimation, composite likelihood selection, $O_F$-optimality, sparsity-inducing penalty
\par
\end{quotation}\par

\def\thefigure{\arabic{figure}}
\def\thetable{\arabic{table}}

\renewcommand{\theequation}{\thesection.\arabic{equation}}

\fontsize{12}{14pt plus.8pt minus .6pt}\selectfont

\afterpage{\chead[]{Fast construction of composite likelihoods}}

\section{Introduction}\label{sec:introduction}

The likelihood function is central to many statistical analyses. There are a number of situations, however, where the  full likelihood is computationally intractable or difficult to specify. These challenges have motivated the development of composite likelihood methods, which avoid 
intractable full likelihoods by combining a set
of low-dimensional likelihood objects. Due
to its flexible framework and computational advantages, composite likelihood inference has become popular
in many areas of statistics; e.g., 
see \cite{Varin&al11} for an overview and applications.

Let $X \subseteq \mathbb R^d$ be a $d \times 1$ vector random variable with density in the family $f(x;\theta)$, where $\theta \in \Theta \subseteq \mathbb{R}^p$ is an unknown parameter. Let $\theta^\ast$ denote the true parameter. Suppose now that the full $d$-dimensional distribution is difficult to specify or compute, but it is possible to specify $m$ tractable  pdfs
$f_1(s_1;\theta), \dots, f_m(s_m; \theta)$ for
sub-vectors $S_1,\dots, S_m$ of $X$,  each with dimension
smaller than $d$. For example, 
$S_1$ may represent a single element like $X_1$, a variable pair
like $(X_1 , X_2)$ or a conditional sub-vector like $X_1|X_2$. The total number of sub-models $m$ may grow quickly with  $d$; for example, taking all variable pairs in $X$ yields $m=d(d-1)/2$. Thus, from one vector $X$ we
form  the composite log-likelihood 
\begin{align*}
\ell(\theta, w; X) =  \sum_{j=1}^m w_j \ell_j(\theta; X) = \sum_{j=1}^m w_j \log f_{j}(S_j; \theta),
\end{align*}
where $w$ is a $m \times 1$ vector of constants to be  determined as a solution to an optimality problem. For $n$ independent, identically distributed vectors $X^{(1)}, \dots, X^{(n)}$ we define 
\begin{align*}
\ell(\theta, w; X^{(1)}, \dots, X^{(n)}) =  \sum_{i=1}^n \ell(\theta;w, X^{(i)}).
\end{align*}
The score functions are obtained in the usual way:
\begin{align*}  
U(\theta, w; X)  = \nabla \ell(\theta, w; X) = \sum_{j=1}^m w_{j} U_{j}(\theta; X), \ \ \ \ \ U_j (\theta; X) = \nabla \ell_{j}(\theta ; X),\\
U(\theta, w;  X^{(1)}, \dots, X^{(n)})  = \nabla \ell(\theta, w; X^{(1)}, \dots, X^{(n)})   = \sum_{j=1}^m w_{j} \sum_{i=1}^n U_{j}(\theta; X^{(i)}), 
\end{align*}
where ``$\nabla$" denotes the gradient with respect to $\theta$. The maximum composite likelihood estimator $\hat \theta(w)$ is defined as the solution to the    estimating equation
\begin{align} \label{eq:est_equation}
U(\theta, w;  X^{(1)}, \dots, X^{(n)})  = \sum_{j=1}^m w_{j} \sum_{i=1}^n U_{j}(\theta; X^{(i)}) = 0, 
\end{align}
for some appropriate choice of $w$. Besides computational advantages and modeling
flexibility, one reason for the popularity of the composite likelihood estimator is that  it enjoys  
properties analogous to maximum likelihood  \citep{Lindsay88, Lindsay&al11, Varin&al11}. Under typical regularity conditions, the composite likelihood estimator is asymptotically normal with mean $\theta^\ast$ and variance $\{ {G}(\theta^\ast,w)\}^{-1}$, where  
\begin{equation}\label{eq:Godambe}
{G}(\theta^\ast, w) =  H(\theta^\ast, w)  \{ K(\theta^\ast, w)  \}^{-1} H(\theta^\ast, w) ,
\end{equation}
is the so-called Godambe information matrix,  $H(\theta, w) = - E\{\nabla U(\theta, w; X)\}$ and $K(\theta, w)= 
\cov\{U(\theta, w ;X)\}$ are the $p\times p$  sensitivity  and variability matrices, respectively. Although the maximum composite likelihood estimator is consistent,  $  G(\theta^\ast, w)$ is generally different from the
Fisher information $ \cov \{\nabla  \log f(X; \theta^\ast)\}$, with the two coinciding only in special cases where  $H(\theta^\ast, w) = K(\theta^\ast, w)$.

The choice of $w$ determines both the statistical properties and computational efficiency
of the composite likelihood estimator \citep{Lindsay&al11, xu2011robustness,huang2020specification}. On one hand, the established theory of unbiased
estimating equations would suggest to find $w$ so to maximize
$tr\{ {G}(\theta^\ast, w)\}$ \citep[Chapter 2]{heyde2008quasi}. Although
theoretically appealing, these optimal weights depend on inversion of the score covariance matrix whose estimates are often singular. Different selection strategies to balance the trade-off between statistical efficiency and computing cost have been explored in the literature.  A common practice is to retain all feasible
sub-likelihoods with $w_j =1$, for all $j\geq 1$, but this
is undesirable from either computational parsimony or statistical efficiency
viewpoints, since the presence of too many correlated scores inflates the variability matrix $K$ \citep{Cox&Reid04, ferrari2016parsimonious}.  A smaller subset may be selected by setting some of the $w_j$s equal to zero, but determining a suitable subset remains challenging. \cite{dillon2010stochastic} and \cite{ferrari2016parsimonious} develop stochastic approaches where sub-likelihoods are sampled according to a statistical efficiency criterion. Ad-hoc methods have been developed depending on the type of model under exam; for example, in spatial data analysis it is often convenient to consider  sub-likelihoods corresponding to close-by observations; e.g., see \cite{heagerty1998composite, sang2014tapered, bevilacqua2015comparing}.

Motivated by this gap in the literature, the present paper develops a methodology to select sparse composite likelihoods in large problems by retaining only the most informative scores in the estimating equations (\ref{eq:est_equation}), while dropping the   noisy ones. To this end, we propose to minimize the distance between the maximum likelihood score and the composite likelihood score subject to a constraint representing  the overall computing cost.

The reminder of the paper is organized as follows. In Section \ref{sec:method} we describe the main sub-likelihood selection and combination methodology. In Section \ref{sec:properties}, we discuss the properties of our method. Particularly, while Theorem \ref{prop:proposition2} shows that the proposed empirical composition rule is asymptotically equivalent to the optimal composition rule that uses all the available scores,   Theorems \ref{thm:consistent} and \ref{thm:normality} give consistency and asymptotic normality of the resulting parameter estimator. In Section \ref{sec:example}, we discuss some illustrative examples related to common families of models. In Section \ref{sec:realdata}, we apply our method to  real Covid-19 epidemiological data. Finally, in Section \ref{sec:discussion} we conclude and provide final remarks.

\afterpage{\chead[]{}}

\section{Main methodology} \label{sec:method}

\subsection{Penalized score distance minimization}

We propose to solve equation (\ref{eq:est_equation}) with weights $w=w_\lambda(\theta)$ selected by minimizing the penalized score distance
\begin{equation} \label{eq:criterion}
\dfrac{1}{2}E \left\|   U^{ML} (\theta; X)- U(\theta, w; X) 
\right\|^2_2   + \lambda  \sum_{j=1}^m  \left\vert   w_j \right\vert,
\end{equation}
where $U^{ML}(\theta;x)= \nabla \log f(x; \theta)$ is the maximum likelihood score, $\Vert \cdot \Vert_2$ denotes the $L_2$-norm, $\lambda 
\geq 0$ is a regularization parameter. The resulting minimizer, say $w_{\lambda}(\theta)$, is then used for parameter estimation by solving the composite likelihood estimating equation (\ref{eq:est_equation}) in $\theta$ with $w = w_{\lambda}(\theta)$.

The vector of coefficients minimizing  (\ref{eq:criterion}) is allowed to contain positive, negative  or zero values, although negative elements do not cause any specific concerns in our method. The size of such coefficients is expected to be larger for those sub-likelihoods that are strongly correlated with the full likelihood. Thus, a negative coefficient associated with certain sub-likelihood score does  not imply that such a sub-likelihood is less informative, but simply means that it has negative correlation with the maximum likelihood score.

The  first term  in the objective (\ref{eq:criterion})  aims at improving statistical efficiency  by
finding a composite score close to the maximum likelihood score. Note that minimizing the first term alone (when $\lambda =0$) corresponds to finding finite-sample optimal $O_F$-optimal estimating equations. This criterion formalizes the idea of minimization of the variance in estimating equations; see \cite[Ch.1]{heyde2008quasi}.  \cite{Lindsay&al11} point out that this type of criterion is suitable in the context of composite likelihood estimation, although finding a general computational procedure to minimize such a criterion in large probelms is still an open problem.

The  term $\lambda  \sum_{j=1}^m \left\vert   w_j
\right\vert$ is a  penalty discouraging 
overly complex scores.  The
geometric properties of the $L_1$-norm  penalty ensure that   several elements in the solution  
$w_{\lambda}(\theta)$ are zero for sufficiently large $\lambda$, thus simplifying the resulting estimating equations. This is a key property of the 
proposed approach which is exploited to reduce the computation burden. 

The optimal solution $w_\lambda(\theta)$ may be interpreted as one that maximizes 
statistical accuracy, subject to a given level of afforded computing.  Alternatively, $w_\lambda(\theta)$ may be viewed as a composition rule that minimizes the 
likelihood complexity subject to some afforded efficiency loss
compared to maximum likelihood. The  constant $\lambda$ balances the trade-off between   statistical efficiency and computational cost:  $\lambda=0$ is optimal in terms of asymptotic efficiency, but offers no reduction in likelihood complexity, while for increasing $\lambda>0$  informative data subsets and their scores might be tossed away. 

Difficulties related to the direct minimization of (\ref{eq:criterion})  are the presence 
of the intractable likelihood score function $U^{ML}$  and the expectation depending on  
the unknown parameter $\theta$. Up to an additive term independent of $w$, the penalized score distance in (\ref{eq:criterion}) can be expressed as
\begin{align} \label{eq:criterion1}
\dfrac{1}{2} E \left\|   
U(\theta, w; X)
\right\|^2_2 -  E \left\{   
U^{ML}(\theta; X)^{\top} U(\theta, w; X) \right\} + \lambda  \sum_{j=1}^m  \left\vert   w_j \right\vert.
\end{align}
Let $M(\theta; X)=(U_1(\theta; X), \dots, U_m(\theta; X))$ be the $p \times m$ matrix with columns given by $p \times 1$ score vectors $U_j(\theta; X)$ $(j=1,\dots,m)$. Then, the first term of (\ref{eq:criterion1}) may be expressed as  $w^\top J(\theta) w/2$, where $J(\theta)$ is the $m \times m$ score covariance matrix 
$$
J(\theta) = E\{ M(\theta; X)^\top M(\theta; X) \}.
$$ 
Note that at $\theta=\theta^\ast$, assuming unbiased scores with $E\{U_{j}(\theta^\ast;X)\} = 0$ ($j=1,\dots,m$), the second Bartlett equality gives $E\{U^{ML}(\theta^\ast;X) U_{j}(\theta^\ast;X)^\top \}=   E\{U_{j}(\theta^\ast;X) U_{j}(\theta^\ast;X)^\top\} = - E \{\nabla U_{j}(\theta^\ast;X) \}$; this implies that the second term in (\ref{eq:criterion1}) is $
- w^\top \text{diag}\{ J(\theta^\ast) \}
$, 
where $\text{diag}(A)$ denotes the diagonal vector of the matrix $A$. Therefore, (\ref{eq:criterion1})  may be approximated by
\begin{align} \label{eq:criterion2}
d_\lambda (\theta, w)  = 
\dfrac{1}{2} w^\top J(\theta) w  -  
w^\top \text{diag}\{ J(\theta) \} + \lambda  \sum_{j=1}^m  \left\vert   w_j \right\vert.
\end{align}

For $n$ independent observations  $X^{(1)}, \dots, X^{(n)}$ on $X$,    we  obtain  the empirical composition rule $\hat w_\lambda(\theta)$ by minimizing the empirical criterion 
\begin{align}\label{eq:criterion_empirical}
\hat d_\lambda(\theta, w) =    \dfrac{1}{2}w^\top \hat J( \theta)  
w - w^\top \text{diag}\{\hat J(\theta) \}  +   \lambda  \sum_{j=1}^m   \left\vert   w_j \right\vert,
\end{align}
where 
$
\hat J(\theta) = n^{-1} \sum_{i=1}^n M(\theta; X^{(i)})^\top  M(\theta; X^{(i)})$.

The final composite likelihood estimator may be found by  replacing $w = \hat w_\lambda(\theta)$ in  (\ref{eq:est_equation}) and then solving the following estimating equation with respect to $\theta$
\begin{equation} \label{eq:est_eq3}
U(\theta, \hat w_\lambda(\theta);  X^{(1)}, \dots, X^{(n)}) = 0.
\end{equation}
Although $\hat w_\lambda(\theta)$ is generally smooth in a neighborhood of $\theta^\ast$, it may exibit a number of nondifferentiable points on the parameter space $\Theta$. This means that convergence of standard gradient-based root-finding algorithms, such as the Newton-Raphson algorithm, is not guaranteed.

To address this issue, we propose to take a preliminary root-$n$ consistent estimate $\tilde \theta$, and find the final estimator $\hat \theta_{\lambda}$ instead solving the estimating equation 
\begin{equation} \label{eq:est_eq3}
U(\theta, \hat w_\lambda(\tilde \theta);  X^{(1)}, \dots, X^{(n)}) = 0.
\end{equation}
where $ \hat w_\lambda(\tilde \theta)$ is a quantity fully dependent on the data. A preliminary estimate is often easy to obtain, for example by solving (\ref{eq:est_equation}) with $w_j=1$ for all $1 \le j \le m$. Alternatively, a computationally cheap root-$n$ consistent estimate may be obtained by setting $w_j = 1$ for $j$ in some random subset $S \subseteq \{1, \dots, m\}$ and $w_j=0$ otherwise.

\subsection{Computational aspects and selection of $\lambda$} \label{Sec:lambda}

For the numerical examples in this paper, the following implementation is considered. We first compute the sparse composition rule $\hat w_{\lambda}(\tilde \theta)$, by minimizing the convex criterion (\ref{eq:criterion_empirical}) with $\theta = \tilde \theta$  being a preliminary root-$n$ consistent estimator.  Minimization of (\ref{eq:criterion_empirical}) is implemented through the least-angle regression  algorithm  \citep{Efron04}. Finally, the estimator $\hat \theta_{\lambda}$ is obtained by a one-step Newton-Raphson update starting from $\theta = \tilde \theta$  applied to (\ref{eq:est_equation}). See Chapter 5 in \cite{van2000asymptotic} for an introduction to one-step estimation. As preliminary estimator $\tilde \theta$, one may choose the composite likelihood estimator with uniform composition rule $w_j =1$, for all $j\geq 1$. From theoretical view point, any root-$n$ consistent initial estimator $\tilde \theta$ leads to the same asymptotic results of the final estimate. We find that the impact of the initial estimates is negligible in many situations. 

Analogous to  the least-angle algorithm originally developed by \cite{Efron04}   in the context of sparse linear regression, each step of our implementation includes the score  $U_j(\theta; X)$ having the  largest correlation with the  residual difference 
$U_j( \theta ; X) - U( \theta , w; X)$, followed by an adjustment step on $w$. An alternative computing approach is to solve  (\ref{eq:criterion_empirical}) with respect to $\theta$ with $w= w_\lambda(\theta)$ using a Newton-Raphson algorithm with  $w_\lambda(\theta)$ updated through  a coordinate-descent approach as in \cite{wu2008coordinate} in each iteration.

Selection of $\lambda$ is an aspect of practical importance since it balances the trade-off between statistical and computational efficiency. In many practical applications this choice ultimately depends on the available computing resources and the objective of one's analysis. Although    a universal approach for selection is not sought in this paper, the following heuristic strategy may be considered. Taking a   grid  $\Lambda$ of  values for $\lambda$ corresponding to different numbers of selected scores, we consider $\hat{\lambda}=\max \{   \lambda \in \Lambda: \phi(\lambda) > \tau \}$, for some user-specified  constant $0 <\tau \leq 1$, where $\phi(\lambda) =   \text{tr}\{ \hat J_{\lambda} \}/\text{tr}\{\hat J\}$.
Here $\hat J_{\lambda}$ denotes the empirical covariance matrix for the selected partial scores evaluated at $\tilde \theta$, whilst $\hat J$ is the covariance for all scores. Thus  $\phi(\lambda)$ can be viewed as the approximate proportion of score variance explained by the selected scores.  In practice,  one may  choose $\tau$  to be a sufficiently large  value, such as $\tau = 0.75$ or $\tau = 0.90$.

When the covariance for all scores $\hat J$ is difficult to obtain due to excessive computational burden, we propose to use an upper bound of $\phi(\lambda)$ instead. Let $\tilde \lambda \in \Lambda$ be the next value for $\lambda \in \Lambda$ smaller than $\hat \lambda$, i.e. we set $\tilde \lambda = \hat \lambda$ if $\hat\lambda = \min\{\Lambda\}$, and $\tilde \lambda = \max\{\lambda \in \Lambda: \lambda< \hat \lambda\}$ otherwise.  Note that $\phi(\hat \lambda) < \text{tr}\{ \hat J_{\hat \lambda} \}/\text{tr}\{\hat J_{\tilde \lambda}\}$, where the right hand side represents the relative proportion of score variance explained by reducing $\lambda$ from $\hat \lambda$ to $\tilde \lambda$. Thus, in practice, one may take $\hat \lambda$ such that $\text{tr}\{ \hat J_{\hat \lambda} \}/\text{tr}\{\hat J_{\tilde \lambda}\}>\delta$ for some relative tolerance level $0<\delta<1$.

\section{Properties} \label{sec:properties}

\subsection{Conditions for uniqueness} 

This section gives an explicit expression for the minimizer of the penalized score distance criterion and provides sufficient conditions for its uniqueness. The main requirement for uniqueness is that each partial score cannot be fully determined by a linear combination of other scores. Specifically, we require the following condition:

\begin{condition} \label{cond1}
	Define $U_j = U_j(\theta, X)$. For any $\lambda>0$ and $\theta \in \Theta$, the random vectors $(U_1^\top,U_1^\top U_1\pm\lambda),\dots,$ $(U_m^\top,U_m^\top U_m\pm\lambda)$ are linearly independent. 
\end{condition}
  
We note that the condition is automatically satisfied, unless some partial score is perfectly correlated to the others, which is rarely the case in real applications.

For a vector $a \in \mathbb{R}^m$, we use $a_{\mathcal{E}}$ to denote the sub-vector corresponding to index $\mathcal{E} \subseteq \{1, \dots, m\}$, while $A_{\mathcal{E}}$
denotes the sub-matrix of the squared matrix $A$ formed by taking rows and columns corresponding to $\mathcal{E}$. 
The notation $\text{sign}(w)$ is used for the vector sign function with $j$th element taking values $-1$,  $0$ and $1$ if $w_j<0$, $w_j= 0$ and $w_j>0$. Let $\eta = 0$ if $\hat J(\theta)$ is positive definite, or else $\eta = \max_{x\in {\mathbb R^q}}[  \text{diag}\{ \hat J(\theta)\}^\top V(\theta)x / \| V(\theta) x \|_1 ]$, where $q$ denotes the number of zero eigenvalues of $\hat J(\theta)$ and $V(\theta)$ is a $m\times q$ matrix collecting eigenvectors corresponding to zero eigenvalues.

\begin{theorem} \label{prop:uniqueness}
	Under Condition \ref{cond1}, for any $\theta \in \Theta$ and $\lambda > \eta$, the minimizer  of the penalized distance $\hat d_\lambda(\theta, w)$ defined in (\ref{eq:criterion_empirical}) is unique with probability one and given by 
	\begin{equation*}  
	\hat w_{\lambda, \hatEps}(\theta) = \left\{ \hat J_{\hatEps}(\theta) \right\}^{-1} 
	\left[\text{diag}\left\{ \hat J_{\hatEps}(\theta) \right\} - \lambda \ \text{sign}\{\hat w_{\lambda, \hatEps}(\theta)\}  \right],  \  \  \hat w_{\lambda,  \setminus \hatEps}(\theta) = 0,
	\end{equation*}
	where $\hatEps \subseteq \{1, \dots, m\}$ is the index set defined as
	\begin{align}\label{eq:epsilonhat}
	\hatEps = \Bigg\{ j   :   \Bigg\vert & n^{-1} \sum_{i=1}^n  U_j(\theta;  X^{(i)})^\top      U_j(\theta;  X^{(i)})   \notag \\  
	&- n^{-1} \sum_{i=1}^n  U_j(\theta; X^{(i)})^\top U(\theta, \hat w_\lambda(\theta); X^{(i)})   \Bigg\vert \geq   \lambda \Bigg\},
	\end{align}
	and $\setminus \hatEps$ denotes the complement index set $\{1, \dots, m\} \setminus \hatEps$. Moreover, $\hat w_{\lambda, \hatEps}(\theta)$ contains at most $np \wedge m$ non-zero elements.
\end{theorem}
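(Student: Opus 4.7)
The plan is to exploit the convexity of $\hat d_\lambda(\theta, w)$ in $w$ and use the subgradient (Karush--Kuhn--Tucker) characterization of its minimizers. Since $\hat J(\theta) = n^{-1} \sum_{i=1}^n M(\theta; X^{(i)})^\top M(\theta; X^{(i)})$ is positive semidefinite as a sample Gram matrix, the quadratic part of $\hat d_\lambda$ is convex, and adding $\lambda \|w\|_1$ preserves convexity; hence any vector satisfying the first-order subgradient condition is a global minimizer. First I would write this condition as
\[
\hat J(\theta)\, \hat w - \text{diag}\{\hat J(\theta)\} + \lambda\, s = 0,
\]
with $s_j = \text{sign}(\hat w_j)$ when $\hat w_j \neq 0$ and $s_j \in [-1,1]$ otherwise. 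Reading off the $j$-th coordinate and noting that $[\hat J(\theta) \hat w]_j = n^{-1} \sum_i U_j(\theta; X^{(i)})^\top U(\theta, \hat w; X^{(i)})$, the set $\hatEps$ in (\ref{eq:epsilonhat}) is identified as the equicorrelation set on which the subgradient constraint is tight; on its complement the constraint is strict, forcing $\hat w_{\setminus \hatEps} = 0$.

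Next I would substitute $\hat w_{\setminus \hatEps} = 0$ into the KKT equality restricted to indices in $\hatEps$, obtaining
\[
\hat J_\hatEps(\theta)\, \hat w_\hatEps = \text{diag}\{\hat J_\hatEps(\theta)\} - \lambda\, \text{sign}\{\hat w_\hatEps\},
\]
which, once $\hat J_\hatEps(\theta)$ is known to be invertible, yields the stated closed form. The cardinality bound $|\hatEps| \leq np \wedge m$ then follows from the fact that $\hat J_\hatEps(\theta)$ is a nonsingular principal submatrix of $\hat J(\theta)$, whose rank cannot exceed the number of rows of the stacked score matrix $\mathbf{M} \in \mathbb{R}^{np \times m}$ nor its number of columns.

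The hard part will be establishing uniqueness together with invertibility of $\hat J_\hatEps(\theta)$, since when $np < m$ the full matrix $\hat J(\theta)$ is singular. I would adopt a LASSO-style uniqueness argument: if $\hat w$ and $\hat w'$ were two distinct minimizers, strict convexity of $\|\mathbf{M} w\|_2^2$ along the range of $\mathbf{M}$, combined with the linearity of the remaining terms on the convex combination $(\hat w + \hat w')/2$, forces $\hat J(\theta)\, \hat w = \hat J(\theta)\, \hat w'$; the KKT equation then yields a common $s$ and hence a common equicorrelation set $\hatEps$, reducing uniqueness to showing that $\hat J_\hatEps(\theta)$ has full rank almost surely. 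This is precisely where Condition \ref{cond1} enters: it rules out any non-trivial almost-sure linear dependence among the augmented random vectors $(U_j^\top, U_j^\top U_j - \lambda\, s_j)$ for $j \in \hatEps$, which transfers at the sample level into $\mathbf{M}_\hatEps$ having full column rank with probability one. The threshold $\lambda > \eta$ additionally precludes the pathological alignment of the penalized target $\text{diag}\{\hat J(\theta)\} - \lambda\, s$ with the null space of $\hat J(\theta)$ when the latter is rank-deficient, ensuring that the KKT system restricted to $\hatEps$ admits a unique solution.
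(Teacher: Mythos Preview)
Your proposal is correct and follows essentially the same route as the paper: KKT subgradient conditions, uniqueness of the fitted composite score $\hat J(\theta)\hat w$ via strict convexity in that direction, then Condition~\ref{cond1} to force full column rank of $\hat J_{\hatEps}(\theta)$ on the (consequently unique) equicorrelation set, from which the closed form and the cardinality bound follow. The one place the paper is more explicit is the role of $\lambda>\eta$: it is invoked to show that $\hat d_\lambda$ is bounded below (so a minimizer exists) via an eigendecomposition of $\hat J(\theta)$, rather than, as you suggest, to ensure solvability of the restricted KKT system---that solvability comes entirely from Condition~\ref{cond1}.
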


When $\lambda= 0$, we have $\hatEps = \{1, \dots ,m \}$, meaning that the corresponding composition rule $\hat w_{\lambda, \hatEps}(\theta)$ does not contain zero elements.  In this case, the empirical score covariance matrix  $\hat J(\theta)$ is required to be non-singular which is certainly violated when $np < m$. Even for the case $np > m$,  $\hat J(\theta)$ may be singular due to the presence of largely correlated  partial  scores.   On  the  other  hand,  setting $\lambda > \eta$  always  gives  a  non-singular score covariance matrix and guarantees existence of $\hat w_{\lambda, \hat \Eps}(\theta)$. For sufficiently large $\lambda$, a relatively small subset of scores is selected. The formula in (\ref{eq:epsilonhat}) suggests that the $j$th score is selected when it contributes enough information  in the overall composite likelihood. Particularly,  the $j$th score is included if the estimated absolute difference between its  Fisher information   $E\{U_j(\theta; X)^\top  U_j(\theta; X) \}$ and the covariance with the overall composite likelihood score $E\{U_j(\theta; X)^\top  U(\theta, w; X) \}$ is greater than $\lambda$.

\subsection{Asymptotic optimality of the empirical composition rule}

The asymptotic behavior of the sparse composition rule $\hat w_\lambda(\theta)$ is investigated in this section as the sample size $n$ diverges. The main result is the convergence of  $\hat w_\lambda(\theta)$ to the ideal composition rule $w_\lambda(\theta)$, which is defined  as the minimizer of the population criterion $d_\lambda (\theta, w)$ specified  in (\ref{eq:criterion2}). Letting $\lambda \rightarrow 0$ as $n$ increases implies that the sparse rule $\hat w_\lambda(\theta)$ is asymptotically equivalent to the optimal rule $w_0(\theta)$ in terms of the estimator's variance, with the  latter, however, involving all $m$ scores. To show this,   an additional technical requirement on the  covariance between sub-likelihood scores is introduced.

\begin{condition} \label{cond2}
	 For all $j,k \geq 1$, $sup_{\theta \in \Theta} |  \hat J(\theta)_{jk} - J(\theta)_{jk} | \to 0 $ in probability as $n \to \infty$, where  $\hat J(\theta)_{jk}$ and $J(\theta)_{jk}$ are the $\{j,k\}$th element of $\hat J(\theta)$ and $ J(\theta)$ respectively.	  Moreover, each element of $J(\theta)$ is continuous and bounded, and the smallest eigenvalue of $J(\theta)$ is bounded away from zero for all $\theta \in \Theta$ uniformly.   
\end{condition}

\begin{theorem} \label{prop:proposition2}
	Under Conditions \ref{cond1} and \ref{cond2}, for any $\lambda>0$ and $\theta \in \Theta$,  we have $\sup_{\theta \in \Theta} \Vert \hat w_\lambda(\theta) - w_\lambda(\theta) \Vert_1  \rightarrow 0$ in probability, as $n \rightarrow \infty$.
\end{theorem}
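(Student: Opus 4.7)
The plan is to exploit uniform strong convexity of the population criterion $d_\lambda(\theta,\cdot)$ together with the uniform convergence of $\hat J$ to $J$ from Condition \ref{cond2}. Since $J(\theta)$ has smallest eigenvalue bounded below by some $c>0$ uniformly in $\theta$, the quadratic part $w\mapsto \tfrac{1}{2} w^\top J(\theta) w$ is $c$-strongly convex in $w$, and adding the convex $L_1$ penalty preserves strong convexity; Condition \ref{cond1} additionally guarantees that the minimizers $w_\lambda(\theta)$ and $\hat w_\lambda(\theta)$ are well defined. Strong convexity will convert a small objective-value gap into a small squared distance, while Condition \ref{cond2} will ensure the gap is uniformly small in $\theta$.

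First I would check that both $w_\lambda(\theta)$ and $\hat w_\lambda(\theta)$ lie in a fixed bounded set $W\subset\mathbb{R}^m$ uniformly in $\theta$, with probability tending to one. Using $d_\lambda(\theta,0)=0$, the minimizer $w_\lambda(\theta)$ satisfies
$$
\tfrac{c}{2}\,\|w_\lambda(\theta)\|_2^2 + \lambda\|w_\lambda(\theta)\|_1 \;\le\; w_\lambda(\theta)^\top \text{diag}\{J(\theta)\},
$$
and the uniform boundedness of $\text{diag}\{J(\theta)\}$ yields a uniform bound on $\|w_\lambda(\theta)\|_2$. The same argument applied to $\hat d_\lambda$ bounds $\|\hat w_\lambda(\theta)\|_2$, using that by Weyl's inequality and Condition \ref{cond2}, $\hat J(\theta)$ inherits uniform positive definiteness and bounded diagonal with probability tending to one. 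Next, observe that
$$
\hat d_\lambda(\theta,w)-d_\lambda(\theta,w)
\;=\; \tfrac{1}{2}\, w^\top\{\hat J(\theta)-J(\theta)\} w \;-\; w^\top \text{diag}\{\hat J(\theta)-J(\theta)\},
$$
which does not depend on $\lambda$ because the penalty cancels. For $w\in W$ this is bounded by a constant times $\max_{1\le j,k\le m}\sup_{\theta\in\Theta}|\hat J(\theta)_{jk}-J(\theta)_{jk}|$, which tends to zero in probability by Condition \ref{cond2} (since $m$ is fixed the max is over finitely many entries). Hence
$$
\Delta_n \;:=\; \sup_{\theta\in\Theta}\,\sup_{w\in W}\, \bigl|\hat d_\lambda(\theta,w)-d_\lambda(\theta,w)\bigr| \;\longrightarrow\; 0 \quad\text{in probability.}
$$

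Finally, the standard three-term decomposition gives
$$
d_\lambda(\theta,\hat w_\lambda(\theta)) - d_\lambda(\theta,w_\lambda(\theta)) \;\le\; 2\Delta_n,
$$
since the sub-optimality term $\hat d_\lambda(\theta,\hat w_\lambda(\theta))-\hat d_\lambda(\theta,w_\lambda(\theta))$ is non-positive by minimality of $\hat w_\lambda(\theta)$ while the remaining two terms are each bounded by $\Delta_n$. Combining this with the strong-convexity bound $d_\lambda(\theta,\hat w_\lambda(\theta))-d_\lambda(\theta,w_\lambda(\theta)) \ge \tfrac{c}{2}\|\hat w_\lambda(\theta)-w_\lambda(\theta)\|_2^2$ yields $\sup_{\theta\in\Theta} \|\hat w_\lambda(\theta)-w_\lambda(\theta)\|_2^2 \le 4\Delta_n/c$, which tends to zero in probability, and since $m$ is fixed the inequality $\|\cdot\|_1 \le \sqrt{m}\,\|\cdot\|_2$ converts this to the $L_1$ conclusion. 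The main obstacle is maintaining uniformity in $\theta$ throughout; all of it is absorbed into the two parts of Condition \ref{cond2}, namely the uniform entrywise convergence of $\hat J(\theta)$ and the uniform spectral lower bound on $J(\theta)$. A secondary subtlety is confirming that $\hat J(\theta)$ inherits uniform well-conditioning with probability tending to one, which follows from Weyl's inequality applied to $\hat J(\theta) = J(\theta) + [\hat J(\theta) - J(\theta)]$.
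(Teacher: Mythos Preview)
Your proof is correct and follows essentially the same strategy as the paper's: both exploit the uniform strong convexity of $d_\lambda(\theta,\cdot)$ (via the eigenvalue lower bound on $J(\theta)$ from Condition~\ref{cond2}) together with the three-term decomposition to convert the uniform convergence of $\hat J$ to $J$ into a bound on the objective gap, and hence on the squared distance. The only noticeable difference is in how the minimizers are shown to be uniformly bounded: the paper's Lemma~\ref{lemma1} invokes the explicit formula from Theorem~\ref{prop:uniqueness} and bounds $\hat J_{\hat\varepsilon}^{-1}$, whereas you compare the minimal value to $\hat d_\lambda(\theta,0)=0$ and use Weyl's inequality to inherit a uniform spectral lower bound for $\hat J(\theta)$ --- a more self-contained argument that yields the same conclusion.
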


Since the preliminary estimate $\tilde \theta$ is consistent, continuity of $w_\lambda(\theta)$ (shown in  Lemma \ref{lemma2} in the Appendix) implies immediately that  $\hat w_\lambda(\tilde \theta)$ converges to $w_\lambda(\theta^\ast)$, i.e. the empirical composition rule converges to the ideal composition rule evaluated at the true parameter.  Theorem 2 implies that the proposed sparse composite likelihood score is a suitable approximation for the optimal score involving $m$ sub-likelihoods. Specifically, under regularity conditions, we have
\begin{equation} \label{eq:equivalence}
\sup_{\theta \in \Theta} \left\Vert \frac{1}{n}\sum_{i=1}^n U(\theta, \hat w_\lambda(\tilde \theta); X^{(i)}) - \frac{1}{n}\sum_{i=1}^n U(\theta, w_0(\theta^\ast); X^{(i)}) \right\Vert_1 \rightarrow 0,  
\end{equation}
in probability as $n \rightarrow \infty$ and $\lambda \rightarrow 0$. Note  the  optimal composition rule $w_0(\theta^\ast)$ and the related Godambe information matrix $ G\{\theta, w_0(\theta^\ast)\}$   are typically hard to compute due to inverting the $m \times m$ score covariance matrix with entries $E[U_j(\theta; X)^\top U_k(\theta; X)]$, $1\leq j,k \leq m$.  On the other hand, the sparse composition rule $\hat w_\lambda(\tilde  \theta)$ and the implied Godambe information have the advantage to be computationally tractable since they only involve a fraction of scores.

\subsection{Limit behavior of the estimator $\hat \theta_\lambda$ and standard errors} \label{Sec:standarderror}

The final estimator $\hat \theta_\lambda$ is an M-estimator solving estimating equations of the form
\begin{equation}\label{equation:est_eq2}
\frac{1}{n}\sum_{i=1}^n U\{\theta, \hat w_\lambda(\tilde \theta); X^{(i)}\} = 0. 
\end{equation}
Since the vector $\hat w_\lambda(\tilde \theta)$ converges to $w_\lambda^\ast = w_\lambda(\theta^\ast)$ by Theorem \ref{prop:proposition2} and Lemma \ref{lemma2},  the limit of (\ref{equation:est_eq2}) may be written as
\begin{equation}\label{equation:est_eq2_lim}
E \left\{ U(\theta,   w_\lambda^\ast; X)\right\} = 0. 
\end{equation}

To show  that $\hat \theta_\lambda$ is consistent for the solution of (\ref{equation:est_eq2_lim}), we assume additional regularity conditions, particularly, to ensure a unique root of the ideal composite likelihood score and stochastic equicontinuity on each sub-likelihood scores.

\begin{condition} \label{cond3}
	 For all constants $c>0$, $\inf_{\{ \theta: \| \theta-\theta^\ast \|_1\geq c\} } \| E\{U(\theta,w_\lambda^\ast;X)\} \|_1 >0 =\| E\{U(\theta^\ast,w_\lambda^\ast;X)\}\|_1 $. Moreover, assume $\sup_{\theta \in \Theta} \|\sum_{i=1}^n U_j(\theta;X^{(i)})/n - E U_j(\theta; X)   \|_1 \to 0$ in probability as $n \to \infty$, for all $1 \le j \le m$.
\end{condition}

\begin{theorem}  \label{thm:consistent}
Under   Conditions \ref{cond1}-- \ref{cond3},  $\hat \theta_\lambda$ converges in probability to $\theta^\ast$.
\end{theorem}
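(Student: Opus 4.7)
The plan is to apply a standard Z-estimator consistency argument, viewing $\hat\theta_\lambda$ as a root of the empirical estimating function $\Psi_n(\theta) = n^{-1}\sum_{i=1}^n U\{\theta,\hat w_\lambda(\tilde\theta);X^{(i)}\}$ and $\theta^\ast$ as the well-separated root of the population analogue $\Psi(\theta)=E\{U(\theta,w_\lambda^\ast;X)\}$ guaranteed by the first part of Condition \ref{cond3}. Once the uniform convergence $\sup_{\theta\in\Theta}\|\Psi_n(\theta)-\Psi(\theta)\|_1 \to 0$ in probability is established, the well-separation property $\inf_{\|\theta-\theta^\ast\|_1\geq c}\|\Psi(\theta)\|_1>0=\|\Psi(\theta^\ast)\|_1$ forces $\hat\theta_\lambda\to\theta^\ast$ in probability via the usual argument (e.g., Theorem 5.9 of \cite{van2000asymptotic}): on the high-probability event where the sup-distance is smaller than half of the identification gap for a given $c>0$, the equation $\Psi_n(\hat\theta_\lambda)=0$ forces $\|\Psi(\hat\theta_\lambda)\|_1$ to be strictly less than the gap, which by well-separation yields $\|\hat\theta_\lambda-\theta^\ast\|_1<c$.

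The bulk of the work is the uniform convergence step. I would handle it through the decomposition
$$
\Psi_n(\theta)-\Psi(\theta) = \sum_{j=1}^m \hat w_{\lambda,j}(\tilde\theta)\Bigl[\tfrac{1}{n}\textstyle\sum_{i=1}^n U_j(\theta;X^{(i)}) - E\{U_j(\theta;X)\}\Bigr] + \sum_{j=1}^m \bigl[\hat w_{\lambda,j}(\tilde\theta)-w_{\lambda,j}^\ast\bigr]E\{U_j(\theta;X)\}.
$$
The $L_1$-norm of the first sum I would bound by $\|\hat w_\lambda(\tilde\theta)\|_1\cdot\max_{1\le j\le m}\sup_{\theta\in\Theta}\|n^{-1}\sum_i U_j(\theta;X^{(i)})-EU_j(\theta;X)\|_1$; the second factor is $o_p(1)$ by the uniform law of large numbers imposed in Condition \ref{cond3} together with a union bound over the finite index set $\{1,\ldots,m\}$, while the first factor is $O_p(1)$ because Theorem \ref{prop:proposition2} applied at the consistent preliminary estimator $\tilde\theta$, combined with continuity of $w_\lambda(\theta)$ from Lemma \ref{lemma2}, gives $\|\hat w_\lambda(\tilde\theta)\|_1\to\|w_\lambda^\ast\|_1<\infty$ in probability. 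The $L_1$-norm of the second sum is bounded by $\|\hat w_\lambda(\tilde\theta)-w_\lambda^\ast\|_1\cdot\max_{1\le j\le m}\sup_{\theta\in\Theta}\|EU_j(\theta;X)\|_1$, the first factor being $o_p(1)$ by Theorem \ref{prop:proposition2} and the second uniformly bounded via the boundedness of the diagonal of $J(\theta)$ in Condition \ref{cond2} together with Jensen's inequality.

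The principal obstacle is the decoupling of the running $\theta$ from $\tilde\theta$ inside the weights: the empirical weight vector $\hat w_\lambda(\tilde\theta)$ is held fixed across $\theta$, while a more ``natural'' limiting rule might have been $w_\lambda(\theta)$. This mismatch is precisely why Condition \ref{cond3} formulates identification in terms of the fixed target $w_\lambda^\ast$ rather than $w_\lambda(\theta)$, and it also explains why no joint-in-$\theta$ uniform control on $\hat w_\lambda(\theta)$ is needed—only the pointwise convergence $\hat w_\lambda(\tilde\theta)\to w_\lambda^\ast$ inherited from Theorem \ref{prop:proposition2}, paired with the individual uniform law on each $U_j$ from Condition \ref{cond3}. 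With both the uniform convergence of $\Psi_n$ to $\Psi$ and the identification in place, the Z-estimator argument closes the proof in a few lines.
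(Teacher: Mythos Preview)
Your proposal is correct and follows essentially the same route as the paper: both argue $\hat w_\lambda(\tilde\theta)\to w_\lambda^\ast$ via Theorem \ref{prop:proposition2} and Lemma \ref{lemma2}, combine this with the per-score uniform law from Condition \ref{cond3} and the boundedness of $E\{U_j(\theta;X)\}$ from Condition \ref{cond2} to obtain the uniform convergence of the composite score, and then close with the well-separation part of Condition \ref{cond3}. Your explicit two-term decomposition and the Jensen argument for bounding $\|EU_j\|_1$ via $\text{diag}\{J(\theta)\}$ spell out steps the paper only sketches, but the structure is the same.
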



To obtain asymptotic normality of the final estimator $\hat \theta_\lambda$, we assume the following condition for the sub-likelihood scores.

\begin{condition} \label{cond4}
Assume for all $j\geq 1$, $\mathrm{var}\{ U_j(\theta^\ast; X) \} < \infty$. In a neighborhood of $\theta^\ast$, each $U_j(\theta; x)$ is twice continuously differentiable in $\theta$, and the partial derivatives are dominated by some fixed integrable functions only depending on $x$. Moreover, assume $H(\theta^\ast,w_\lambda^\ast)$ defined in (\ref{eq:Godambe}) is nonsingular. 
\end{condition}

\begin{theorem} \label{thm:normality}
	Under Conditions \ref{cond1}--\ref{cond4}, we have
\begin{equation}\label{equation:asymptotics}
n^{1/2}  {G}_{\lambda }(\theta^\ast)^{1/2} \left( \hat \theta_\lambda - \theta^\ast \right) \rightarrow N_p(0, I_p),
\end{equation}
as $n \rightarrow \infty$, 
where $ {G}_\lambda(\theta) =   {G}(\theta, w^\ast_\lambda)$ is the $p \times p$ Godambe information matrix defined in (\ref{eq:Godambe}).
\end{theorem}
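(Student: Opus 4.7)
The plan is to proceed by the standard M-estimator expansion, treating the data-dependent weight vector $\hat w_\lambda(\tilde\theta)$ as a converging nuisance quantity and exploiting the fact that the composite score $U(\theta,w;X)=\sum_j w_j U_j(\theta;X)$ is \emph{linear} in $w$, so that the randomness in the weights factors cleanly out of the score sums.

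First, by Theorem \ref{thm:consistent} we have $\hat\theta_\lambda \to \theta^\ast$ in probability, and by Theorem \ref{prop:proposition2} together with the continuity of $w_\lambda(\theta)$ (the lemma cited in the text) we have $\hat w_\lambda(\tilde\theta) \to w_\lambda^\ast$ in probability. Expanding the estimating equation $n^{-1}\sum_i U(\hat\theta_\lambda, \hat w_\lambda(\tilde\theta); X^{(i)})=0$ around $\theta^\ast$ using Condition \ref{cond4} (twice continuous differentiability with a dominating integrable envelope) yields
\begin{equation*}
0 = \frac{1}{n}\sum_{i=1}^n U(\theta^\ast,\hat w_\lambda(\tilde\theta); X^{(i)}) + \left[\frac{1}{n}\sum_{i=1}^n \nabla U(\theta^\dagger, \hat w_\lambda(\tilde\theta); X^{(i)})\right](\hat\theta_\lambda-\theta^\ast),
\end{equation*}
where $\theta^\dagger$ lies between $\hat\theta_\lambda$ and $\theta^\ast$.

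Next, I would treat the two terms on the right separately. For the ``score'' term, linearity in $w$ gives
\begin{equation*}
\frac{1}{\sqrt n}\sum_{i=1}^n U(\theta^\ast, \hat w_\lambda(\tilde\theta); X^{(i)}) - \frac{1}{\sqrt n}\sum_{i=1}^n U(\theta^\ast, w_\lambda^\ast; X^{(i)}) = \sum_{j=1}^m \{\hat w_{\lambda,j}(\tilde\theta)-w_{\lambda,j}^\ast\}\cdot \frac{1}{\sqrt n}\sum_{i=1}^n U_j(\theta^\ast; X^{(i)}).
\end{equation*}
Each factor $\hat w_{\lambda,j}(\tilde\theta)-w_{\lambda,j}^\ast$ is $o_p(1)$, while each partial score sum is $O_p(1)$ by the multivariate CLT under Condition \ref{cond4}, so the whole difference is $o_p(1)$ (the number of sub-likelihoods $m$ is fixed). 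Hence $n^{-1/2}\sum_i U(\theta^\ast,\hat w_\lambda(\tilde\theta);X^{(i)}) \to N_p(0, K(\theta^\ast,w_\lambda^\ast))$. For the ``Hessian'' term, the same linear decomposition plus Condition \ref{cond4}'s dominated partial derivatives gives a uniform LLN on each $\nabla U_j$ over a neighborhood of $\theta^\ast$, so
\begin{equation*}
\frac{1}{n}\sum_{i=1}^n \nabla U(\theta^\dagger,\hat w_\lambda(\tilde\theta);X^{(i)}) = \sum_{j=1}^m \hat w_{\lambda,j}(\tilde\theta)\cdot\frac{1}{n}\sum_{i=1}^n\nabla U_j(\theta^\dagger;X^{(i)}) \xrightarrow{p} -H(\theta^\ast,w_\lambda^\ast),
\end{equation*}
which is invertible by Condition \ref{cond4}.

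Finally, solving the expansion for $\hat\theta_\lambda-\theta^\ast$ and applying Slutsky's theorem yields
\begin{equation*}
\sqrt{n}(\hat\theta_\lambda-\theta^\ast) = H(\theta^\ast,w_\lambda^\ast)^{-1}\cdot\frac{1}{\sqrt n}\sum_{i=1}^n U(\theta^\ast,w_\lambda^\ast;X^{(i)}) + o_p(1) \xrightarrow{d} N_p\bigl(0,\ H^{-1}KH^{-1}\bigr),
\end{equation*}
and the asymptotic covariance is precisely $G_\lambda(\theta^\ast)^{-1}$ by the definition in (\ref{eq:Godambe}); rearranging produces (\ref{equation:asymptotics}). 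The main technical obstacle is the data dependence of the weights $\hat w_\lambda(\tilde\theta)$: one must argue that replacing them by their limit $w_\lambda^\ast$ perturbs neither the $n^{-1/2}$-scaled score nor the Hessian at the leading order. Linearity of $U(\theta,w;X)$ in $w$ makes this step transparent and reduces the problem to the $o_p(1)$ consistency already granted by Theorem \ref{prop:proposition2}, rather than requiring any $\sqrt n$-rate for the weights themselves.
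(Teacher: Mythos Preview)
Your proposal is correct and follows essentially the same route as the paper's proof: a standard M-estimator Taylor expansion of the estimating equation around $\theta^\ast$, exploiting the linearity of $U(\theta,w;X)$ in $w$ together with $\hat w_\lambda(\tilde\theta)\to w_\lambda^\ast$ (Theorem~\ref{prop:proposition2} and Lemma~\ref{lemma2}) to replace the random weights by their limit at both the score and the Hessian level, then applying the CLT and LLN with Slutsky. The only cosmetic difference is that you use an intermediate point $\theta^\dagger$ in the mean-value expansion whereas the paper expands each $U_j$ at $\theta^\ast$ with an $o_p(\|\hat\theta_\lambda-\theta^\ast\|_1)$ remainder and then combines with the bounded weights; the substance is the same.
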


The $p \times p$  Godambe information matrix $G_\lambda(\theta)$ in (\ref{equation:asymptotics}) can be estimated   by the sandwich estimator
$\hat G_\lambda  = \hat H_\lambda \hat K_\lambda ^{-1} \hat H_\lambda$, where $p \times p$ matrices $\hat H_\lambda$ and $\hat K_\lambda$ are obtained, respectively, by replacing $\theta = \hat \theta_\lambda$ in 
\begin{align*}
\hat H_\lambda (\theta)   &= -\dfrac{1}{n}\sum_{i=1}^n \nabla U(\theta, \hat w_\lambda(\tilde \theta); X^{(i)} ),  \\
\hat K_\lambda (\theta)   &= \dfrac{1}{n}\sum_{i=1}^n U(\theta, \hat w_\lambda(\tilde \theta); X^{(i)}) U(\theta, \hat w_\lambda(\theta); X^{(i)} )^\top.
\end{align*}
Practical advantages of using the sparse composition rule $\hat w_\lambda(\theta)$ are  the reduction of computational cost and increased stability of the standard error calculations. Although the score variance matrix $\hat K_\lambda(\theta)$ may be difficult to obtain when $\lambda = 0$ due to potentially $O(m^2)$ covariance terms, choosing a sufficiently large value for $\lambda>0$ avoids this situation by reducing the number of terms in the composite likelihood score.

\section{Examples}\label{sec:example} 

\subsection{Common location for heterogeneous variates}  \label{sec:anaexample1}

Let $X \sim N_m(\theta 1_m, \Sigma)$, where the $m \times m$ covariance matrix $\Sigma$ has off-diagonal elements $\sigma_{jk}$ ($j\neq k$) and diagonal elements $\sigma^2_k$ ($j=k$). Computing the maximum likelihood estimator of $\theta$  requires   $\Sigma^{-1}$ and usually $\Sigma$ is estimated by the sample covariance $\hat{\Sigma}$. When $n<m$, however, the maximum likelihood estimator is not available since the sample covariance $\hat{\Sigma}$ is singular; on the other hand, the composite likelihood estimator is still feasible. The $j$th marginal score is  $U_j(\theta; x)=(x_j-\theta)/ \sigma_j^2$ and the composite likelihood estimating equation   based on the sample $X^{(1)}, \dots, X^{(n)}$ is
\begin{equation} \label{eq:est_eq_ex}
0 =  \sum_{j=1}^m   \dfrac{w_j}{\sigma_j^{2} } \sum_{i=1}^n (X^{(i)}_j-\theta).
\end{equation}
Then the population and empirical score covariances are $m \times m$ matrices with $jk$th entries 
$$
J(\theta)_{jk} = E\left\{ \dfrac{(X_j - \theta)(X_k - \theta)}{\sigma^2_j \sigma^2_k}\right\},  \ \ \ \  \hat J(\theta)_{jk} = \dfrac{1}{n} \sum_{i=1}^n\left\{ \dfrac{(X^{(i)}_j - \theta)(X^{(i)}_k - \theta)}{\sigma^2_j \sigma^2_k}\right\},
$$
respectively. 

It is useful to inspect the special case where $X$ has independent components ($\sigma_{jk}=0$ for all $j\neq k$). This setting corresponds to the fixed-effect meta-analysis model where estimators from  $m$ independent studies are combined to improve accuracy. From Theorem \ref{prop:uniqueness}, the optimal    composition rule is
$$
\hat w_{\lambda,j}(\theta) =  \left\{1- \dfrac{\lambda n \sigma_j^4}{\sum_{i=1}^n(X_j - \theta)^2}\right\}I\left\{ \dfrac{\sum_{i=1}^n(X^{(i)}_j - \theta)^2}{   n\sigma_j^4} > \lambda \right\}, 
$$
whilst the optimal population composition rule $w_{\lambda,j}(\theta)$ is the same 
as the above expression with sample averages replaced by expectations. The   composition rule $w_{\lambda,j}(\theta)$ evaluated at the true parameter   is $
w^\ast_{\lambda,j}= (1- \lambda \sigma^2_j ) I(\sigma^{-2}_j > \lambda)$ $(j=1,\dots,m)$. This highlights that overly noisy data subsets with variance $\sigma^{2}_j > \lambda^{-1}$ are dropped and thus do not influence the final estimator for $\theta$. Particularly, the number of  non-zero elements in $w^\ast_\lambda$ is $\sum_{j=1}^m I (\sigma^2_j< \lambda^{-1} )$.  Finally, substituting  $w_j =  \hat w_{\lambda,j}(\theta)$ in (\ref{eq:est_eq_ex})  gives the following fixed-point equation
\begin{equation}\label{eq:estimator_example1}
\theta = \left\{\sum_{j=1}^m  \dfrac{\hat w_{\lambda,j}(\theta)}{\sigma_j^{2}}\bar{X}_j \right\}/ \left\{\sum_{k=1}^m \dfrac{\hat w_{\lambda,k}(\theta)}{\sigma_k^{2}}\right\},
\end{equation}
which is a weighted average of marginal sample means $\bar{X}_j=n^{-1}\sum_{i=1}^n X_j^{(i)}$  $(j=1,\dots,m)$. The final composite likelihood estimator $\hat \theta_\lambda$ may be obtained by solving equation (\ref{eq:estimator_example1}).

When $\lambda = 0$, we have uniform weights $w^\ast_0=(1,\dots,1)^{\top}$ and the corresponding estimator $\hat \theta_0$ is the usual optimal meta-analysis solution. Although the implied estimator $\hat \theta_0$  has minimum variance, it offers no control for the overall computational cost since all $m$ sub-scores are selected. On the other hand, choosing judiciously $\lambda>0$ may lead to low computational burden with negligible efficiency loss for the resulting estimator. For instance, assuming $\sigma^2_j=j^2$,  a calculation shows 
\begin{align} \label{eq:msd}
\dfrac{1}{2} E \left\{ U(\theta , w^\ast_\lambda; X)- U(\theta , w^\ast_0; X) \right\}^2
\leq   \lambda^2  \sum_{j \in \Eps} j^2    +  \sum_{j \notin \Eps} {j^{-2}}  ,
\end{align}
where $\Eps = \{ j: j^2 < \lambda^{-1} \}$ and $\theta$ here is the true parameter. Since the number of selected scores is $\sum_{j=1}^m I\left( j^2< \lambda^{-1}\right) =  \lfloor \lambda^{-\frac{1}{2}} \rfloor$, we can write
$\lambda^2  \sum_{j \in \Eps} j^2 \leq \lambda^2\lambda^{-1}\lambda^{-\frac{1}{2}} =\lambda^{\frac{1}{2}}$, which converges to zero as $\lambda \rightarrow 0$; additional calculations also show  $\sum_{j \notin \Eps} {j^{-2}}  \rightarrow 0$  as $\lambda \rightarrow 0$. Hence, the left hand side of (\ref{eq:msd}) goes to zero as long as $\lambda \rightarrow 0$. This suggests that  the truncated composite likelihood score approximates suitably the optimal score, while  containing  a relatively small number of terms. If the elements of  $X$ are correlated with $\sigma_{jk}\neq 0$ for $j\neq k$, the partial scores contain overlapping information on $\theta$. In this case, tossing away  highly correlated partial scores would improve computing while maintaining satisfactory statistical efficiency for the final estimator.  

Figure \ref{fig:anaexample1} shows the solution path of $w^\ast_\lambda$; that is, the trajectory of the elements of the optimal composition rule $w^\ast_\lambda$ for different values of $\lambda$ and the asymptotic relative efficiency of the corresponding composite likelihood estimator $\hat \theta_\lambda$   compared to the maximum likelihood estimator for different values of $\lambda$. When $m$ is large ($m=1000$), the asymptotic relative efficiency drops gradually until only a few scores are left.  This  example illustrates that   relatively high efficiency can be reached by the selected composite likelihood equations, when a few partial scores  already contain the majority of information about $\theta$. In such cases, the final estimator $\hat \theta_{\lambda}$ with a sparse composition rule is expected to achieve a good trade-off between computational cost and statistical efficiency.

\begin{figure}[btp]
	\centering
	\quad\quad\quad $\rho=0$, $m=20$ \quad\quad\quad\quad$\rho=0.5$, $m=20$ \quad\quad\quad\quad $\rho=0.5$, $m=1000$\\
	\quad \quad \quad  Number of sub-likelihoods   \quad \quad \quad 
	\includegraphics[scale=0.65]{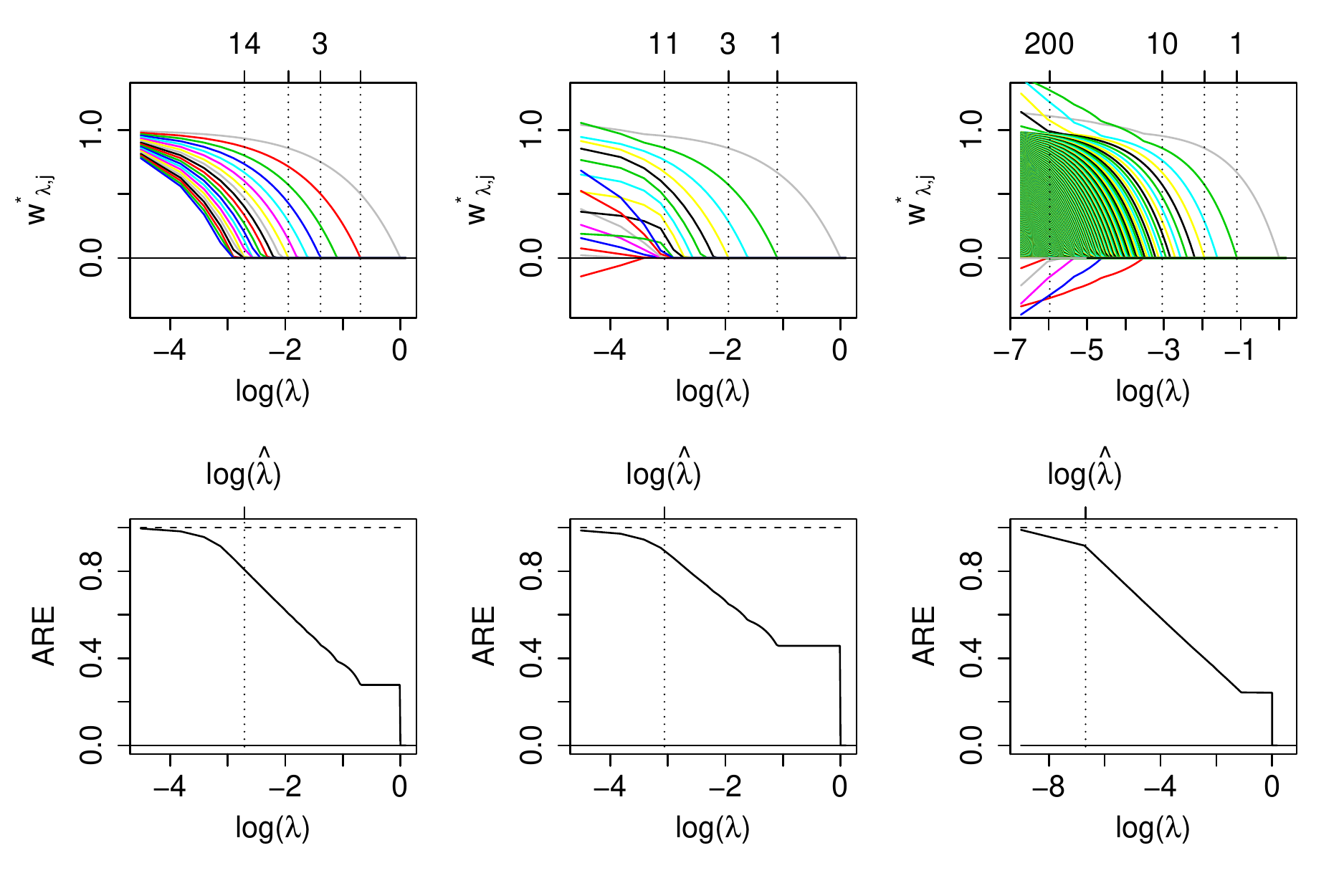}
	\caption{Top Row: Solution paths for the minimizer $w^\ast_\lambda$ of   $d_\lambda(\theta ,w)$ defined in (\ref{eq:criterion2})  at the true parameter for different values of $\lambda$ with corresponding number of sub-likelihoods shown in the top axis. 
		Bottom Row: Asymptotic relative efficiency (ARE) of $\hat \theta_\lambda$ compared to the maximum likelihood estimator.  The vertical dashed lines represent  $\hat\lambda$ selected as described in \S\ref{Sec:lambda} using $\tau=0.9$. Results correspond to the location model $X\sim N_m(\theta 1_m, \Sigma)$ with  $\Sigma_{jk} = j$ ($j=k$) and $\Sigma_{jk} =\rho (jk)^{1/2}$ ($j \neq k$). }
	\label{fig:anaexample1}
\end{figure}

\subsection{Covariance estimation}\label{sec:anaexample3}

Suppose $X$ follows a multivariate normal distribution with zero mean vector and covariance $\Sigma(\theta)$ with elements $\Sigma(\theta)_{jk} = \exp( -\theta  \delta_{jk})$ ($j \neq k$) and $\Sigma(\theta)_{jk} =1$ ($j=k$). The quantity  $\delta_{jk}$ may be regarded as the distance between spatial locations $j$ and $k$, and the case of equally distant points corresponds to covariance estimation for exchangeable variables described in detail by \cite{Cox&Reid04}. The maximum composite likelihood estimator solves
\begin{align*}  
0 = & \sum_{j<k} w_{jk}   \sum_{i=1}^n   U_{jk}(\theta; X_j^{(i)},X_k^{(i)})    \\  
=& \sum_{j<k} w_{jk}   \sum_{i=1}^n   \left[    \dfrac{\Sigma(\theta)_{jk}\{ {X^{(i)}_j}^2+{X^{(i)}_{k}}^2-2 X^{(i)}_jX_k^{(i)}\Sigma(\theta)_{jk}\}}{\{1-\Sigma(\theta)_{jk}^2\}^2} \right] \Sigma(\theta)_{jk} \delta_{jk} \notag\\
&- \sum_{j<k} w_{jk}   \sum_{i=1}^n   \left[\frac{\Sigma(\theta)_{jk}+X^{(i)}_jX_k^{(i)}}{ 1-\Sigma(\theta)_{jk}^2} 
\right]  \Sigma(\theta)_{jk} \delta_{jk} , \notag
\end{align*}
where $U_{jk}(\theta; x_j, x_k)$ is the score of a bivariate normal distribution for the pair $(X_j, X_{k})$.  Figure \ref{fig:anaexample3} shows the solution path of the optimal composition rule $w^\ast_\lambda$  for different values of $\lambda$, and the asymptotic relative efficiency of the estimator $\hat{\theta}_\lambda$ compared to the maximum likelihood estimator. A number of variable pairs ranging from  $m=45$ to $m=1225$   is considered. When $\lambda=0$, the proposed estimator has relatively high asymptotic efficiency.  Interestingly, efficiency stays at around $90\%$ until only a  few sub-likelihoods are left, suggesting that a very small proportion of partial-likelihood components already contains  the majority of the information about $\theta$. In such cases, the proposed selection procedure is useful by reducing the computing burden while retaining satisfactory efficiency for the final estimator.

\begin{figure}[btp]
	\centering
	\quad  \quad       $\theta=0.4$, $m=45$       \quad \quad \quad  $\theta=0.6$, $m=45$ \quad  \quad \quad $\theta=0.6$, $m=1225$\\
	\quad \quad \quad  Number of sub-likelihoods \quad \quad \quad 
	\includegraphics[scale=0.65]{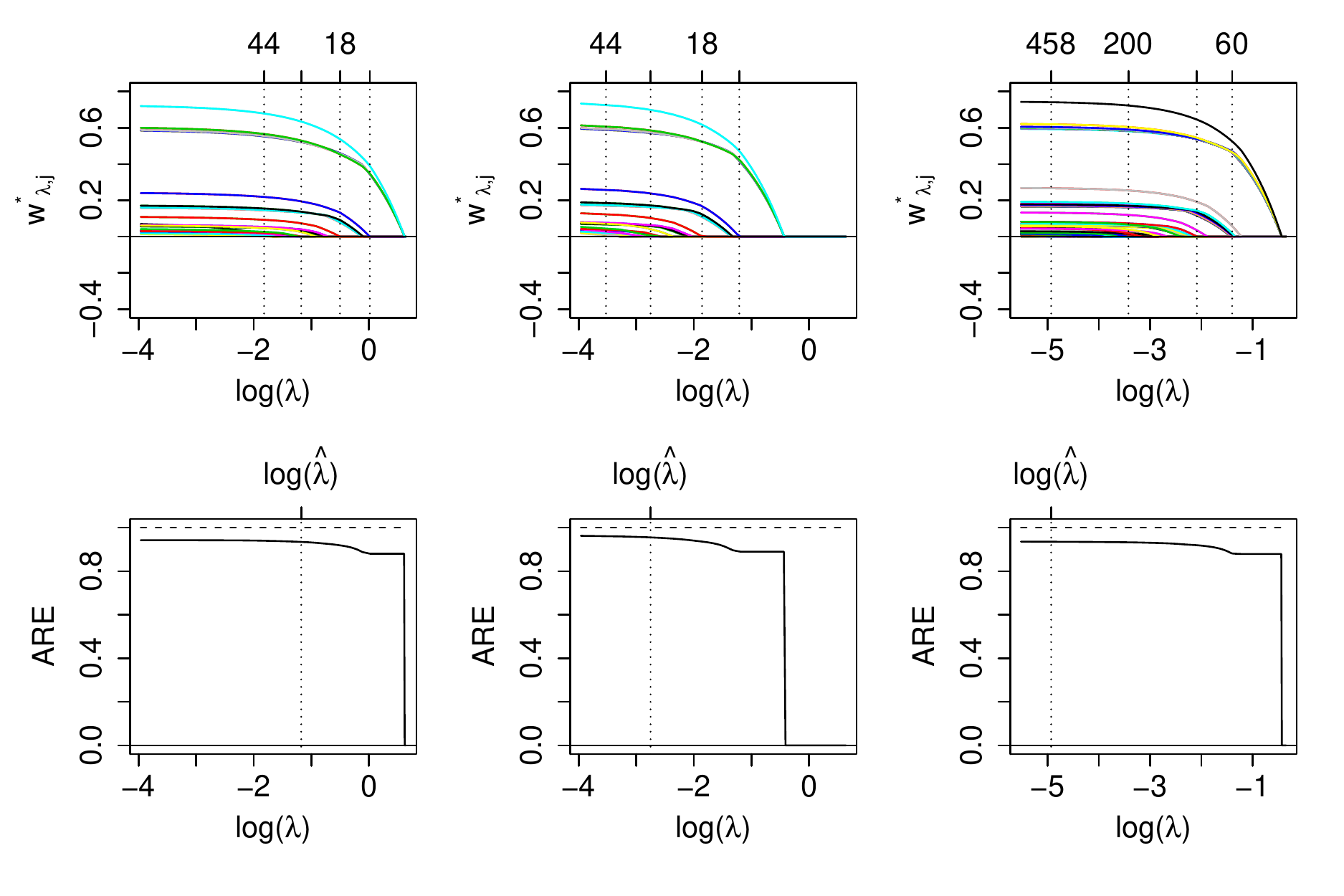}
	\caption{Top Row: Solution paths for the minimizer $w^\ast_\lambda$ of   $d_\lambda(\theta ,w)$ defined in (\ref{eq:criterion2})  at the true parameter for different values of $\lambda$ with corresponding number of sub-likelihoods shown in the top axis.  Bottom Row: Asymptotic relative efficiency (ARE) of the estimator $\hat \theta_\lambda$ compared to the maximum likelihood estimator.   The vertical dashed lines  correspond to $\hat\lambda$ selected as described in \S\ref{Sec:lambda} with $\tau=0.9$. Results correspond to the model $X\sim N_d(0, \Sigma(\theta))$ with   $\Sigma(\theta)_{jk} = \exp\{ -\theta (2|j-k|)^{1/2} \}$. }
	\label{fig:anaexample3}
\end{figure}

\subsection{Location estimation for exchangeable variates} \label{sec:anexample2}

For $X \sim N_m(\theta 1_m, \Sigma)$ with ${\Sigma =  
	(1 -  \rho) I_m + \rho 1_m 1^\top_m }$, $0<\rho<1$,  the marginal scores  $U_j(\theta; X)= X_j-\theta$ ($j=1,\dots, m$) are identically distributed with equal correlation. As $n \rightarrow \infty$ the optimal composition rule converges to
$$
w^\ast_{\lambda,j}= \dfrac{1-\lambda}{\rho(m -1) + 1} I(\lambda<1), \ \ (j=1,\dots, m) 
$$
so the corresponding composite likelihood  estimator is $ \hat \theta_\lambda = \sum_{j=1}^m \bar{X}_j/m$  and is independent of  $\lambda$.  This suggests that  the partial scores are selected randomly in the empirical composition rule $\hat w_\lambda(\theta)$. However,  taking a sufficiently large value for $\lambda$,  so that the sparse  solution containing only a few zero elements still ensures  relatively high statistical efficiency for the corresponding estimator $\hat \theta_\lambda$. To see this, first note that eigenvalue of the score covariance $J(\theta)$ is $\rho(m-1)+1$, whilst the remaining $m-1$ eigenvalues are all equal to $1-\rho$, suggesting that the first score contains relatively large information on $\theta$ compared to the other scores. Furthermore, since  
$
\text{var}(\hat \theta_\lambda ) = \{\rho(m-1)+1\}/(m n)$ the asymptotic relative efficiency of the composite likelihood  with $m< \infty$ compared to that with  $m \rightarrow \infty$ is $\rho m / \{\rho(m-1)+1\}$; this is $0.83$, $0.90$ and $0.98 $, for $m =5, 9$ and $50$, respectively, when $\rho=0.75$. 

\section{Real data example: Spatial covariance estimation for Covid-19 data}\label{sec:realdata}

The methodology is applied to public health data on the Covid-19 epidemic supplied by the Italian Civil Protection Department. The data considered here consists of  $n = 60$ observations on daily new Covid-19 cases from 24 February to 23 April 2020 observed across the $d=90$ Italian provinces corresponding to capital cities in regions or autonomous territories. Data is available as supplementary material to this paper.  Let $X^{(i)}_{j}  = \mu_j^{(i)} + \varepsilon^{(i)}_{j}$ be the  number of new Covid-19 cases observed on day $i$ in province $j$, where $\mu_j^{(i)}$ is a location-specific deterministic trend  and $\sigma^2_j = \text{var}(\varepsilon_j^{(i)})$.
Assume  normally distributed errors $\varepsilon^{(i)}_{1}, \dots, \varepsilon^{(i)}_{m}$ with covariance specified by the gravity model
\begin{equation}\label{eq:Gravity}
\cov(\varepsilon^{(i)}_j, \varepsilon^{(i)}_k) =  \sigma_j \sigma_k \exp\left\{ -\theta {t_{jk}}/{(m_j m_k)} \right\}    \ \   (j, k = 1, \dots, d),
\end{equation}
where $m_j$ denotes the population size (in millions) of the $j$th site and $t_{jk}$ is the distance between sites computed as $t_{jk} = \{(\text{lat}_j -\text{lat}_k)^2 + (\text{lon}_j -\text{lon}_k)^2 \}^{1/2}$, where $\text{lat}_j$ and $\text{lon}_j$ represent latitude and longitude for the $j$th site, respectively. Then each $X^{(i)}= (X^{(1)}_1, \dots, X^{(i)}_d)$ may be regarded as a set of $d$ observations on a Gaussian random field with exponential spatial covariance function given in (\ref{eq:Gravity});  see   \cite{bevilacqua2015comparing} for more details and comparisons on pairwise likelihood estimation for Gaussian random fields. Clearly correlation depends on  population density and potential flows of people across pairs of provinces. To control for these aspects, we included both population size and distance between provinces are included in the  gravity model specified in (\ref{eq:Gravity}). This means that $\theta$ should be interpreted as a covariance parameter for new cases between provinces, conditional  on population size and distance between provinces. Population size and distance are the two main factors often used in formulating distance decay laws, such as gravity models, to represent spatial interactions related to disease diffusion processes.

The main interest is in estimating the covariance parameter $\theta$ to monitor contagion across provinces. To this end, the data are first de-trended and normalized; estimation of the local trends $\hat \mu_j$ are obtained by a Nadaraya-Watson kernel smoother, implemented in the R function \texttt{ksmooth}. The error variance $\sigma^2_j$ are estimated by $\hat \sigma_j^2 = \sum_{i=1}^n (X^{(i)}_j - \hat \mu_j^{(i)})^2/n$. The covariance parameter $\theta$ is subsequently estimated using the pair-wise likelihood approach  in \S\ref{sec:anaexample3} with $\delta_{jk} = t_{jk}/(m_j m_k)$. Table \ref{Tablereal} shows estimates corresponding to a sequence of  decreasing values for $\lambda$. As the number of pair-wise likelihood terms increases,  the estimator $\hat \theta_\lambda$ tends to approach some stable value  and its standard error decreases. For instance, taking $\tau = 0.75$ as defined in \S\ref{Sec:lambda}, the final estimate is $\hat{\theta}_\lambda =  4.25\times 10^{-2}$,  with standard error $1.01 \times 10^{-3}$ corresponding to $58$ pairs of cities selected out of ${90\choose 2} = 4005$ pairs. We also see from the table that both $\hat \theta_\lambda$ and the standard error is converging with only about 58 sub-likelihoods selected. In comparison, the uniform composite likelihood estimator using all pairs of sites with equal weights is $\hat \theta_{\text{unif}} =  6.15\times 10^{-2} $, with standard error $2.42 \times 10^{-3}$.  

\begin{table}[!]
	\caption{Estimates for the spatial covariance parameter for the  Covid-19 data with corresponding standard errors and  number of selected sub-likelihoods (\# sub-likelihoods).}
	\label{Tablereal}
	\footnotesize{
	\begin{tabular}{ cccccccccccc }
		\hline 
		$\lambda$ $(10^5)$      & 8.64 & 8.57 & 8.29 & 8.12 & 7.82 & 7.69 & 7.59 & 7.54 & 7.44 & 7.22 & 7.13\\                
		$\hat \theta_\lambda$ $(10^{-2})$       & 2.58 & 3.53 & 3.58 & 3.90 & 4.03 & 4.12 & 4.23 & 4.25 & 4.25 & 4.25 & 4.25\\
		$SE$ $(10^{-2})$      & 4.50 & 2.34 & 2.17 & 1.26 & 0.84 & 0.69 & 0.47 & 0.46 & 0.11 & 0.10 &  0.08 \\
		$\#$ sub0likelihoods      & 40 & 42 & 44 & 46 &48 &50 & 52& 54& 56 & 58 & 60\\ \hline 
	\end{tabular}
}
\end{table}

\section{Conclusion and final remarks}
\label{sec:discussion}

Composite likelihood inference plays an important role as a remedy to the drawbacks of traditional likelihood approaches with advantages in terms of computing and modeling flexibility. Nevertheless,    a universal procedure to construct composite likelihoods that is statistically justified and fast to execute  does not seem to exist \citep{Lindsay&al11}. Motivated by this gap in the literature, this paper introduces a selection methodology resulting in composite likelihood estimating equations with good statistical properties. The selected equations are sparse for sufficiently large $\lambda$, meaning that they contain only the most informative sub-likelihood score terms. This sparsity-promoting mechanism is found to be   useful in common situations where  the sub-likelihood scores are heterogeneous in terms of their information or when the ideal $O_F$-optimal score is difficult to obtain. Remarkably, the sparse score is shown to approximate $O_F$-optimal score in large samples under reasonable conditions; see Theorem \ref{prop:proposition2} and Equation \ref{eq:equivalence}.

For implementation, we proposed a selection criterion to choose $\lambda$ which perform well in the examples, instead of providing a universal approach. In practice, it could be feasible to use any alternative criterions to choose $\lambda$, according to the realization of the problems. For example, when the full score covariance is not available due to computational burden, one may consider to use the upper bound provided in Section \ref{Sec:lambda}, or to choose $\lambda$ up to some given level of information gain, defined by the ratio of the smallest eigenvalue to the trace of the current selected score covariance, which is decreasing with $\lambda$ by min-max theorem in linear algebra. As another idea, we note that by the Karush-Kuhn-Tucker condition of quadratic optimization, $\lambda$ represents the norm of the estimated covariance between the current selected sub-score $U_j(\theta; X)$ and the residual $\{U(\theta,w;X)-U_j(\theta;X)\}$. One can choose $\lambda$ such that the covariance is smaller than some pre-fixed value.

Building on the recent success of shrinkage methods for the full likelihood, many  works have proposed the use of sparsity-inducing penalties in the composite likelihood framework; e.g., see \cite{bradic2011penalized, xue2012nonconcave,  gao2017data}. However, the spirit of our approach is entirely different from these methods, since our penalty focuses on entire sub-likelihood functions rather than on elements of $\theta$. In contrast to the above approaches, our penalization strategy has the advantage of retaining asymptotically unbiased estimating equations, thus leading to desirable asymptotic properties of the related parameter estimator.

A number of  developments of the present study may be pursued in the future from either theoretical or applied viewpoints. Although the current paper focuses on the case where $p$ is finite, penalties able to deal with situations where both $m$ and $p$ are allowed to grow with $n$ may be useful for the analysis of high-dimensional data. Implementations of the convex  efficiency criterion (\ref{eq:criterion}) beyond the current i.i.d. setting would be another useful future research direction. For example, this  would be valuable for the analysis of spatial or spatio-temporal data, where often the overwhelming number of sub-likelihoods poses a challenge to traditional composite likelihood methods.

\section*{Acknowledgement} 

The authors acknowledge the financial support of the Italian Ministry of University and Research (MIUR) -- Research Project of National Interest (PRIN) grant  2017TA7TYC.

\section*{Appendix: Proofs  }

For convenience, we use $U_j(\theta;X^{(i)})$ for $U_j^{(i)}$ and let $M^{(i)}$ be the $p\times m$ matrix collecting $U^{(i)}_j$ for $j=1,\dots,m$. Let $M^{(i)}_{\hat \varepsilon}$ be the sub-matrix of $M^{(i)}$ with columns indexed by the set $\hat\varepsilon$ and denote $U^{(i)} = U\{\theta;\hat w_\lambda(\theta),X^{(i)}\} = M^{(i)} \hat w_\lambda(\theta)$. 

\begin{proof}[Proof of Theorem \ref{prop:uniqueness}]
	We first note that for any $\theta \in \Theta$, $\hat d_\lambda(\theta,w)$ is lower bounded, thus the minimizer exists. This is implied by taking the eigen decomposition of the real Hermitian matrix $\hat J(\theta)$ and then re-organize $\hat d_\lambda(\theta,w)$ as a summation of perfect square terms corresponding to nonzero eigenvalues, a non-negative first order term corresponding to zero eigenvalues and a constant.   We also note that $U^{(i)}$ is unique due to the strict convexity of the first term of $\hat d_\lambda (\theta,w)$ with respect to $U^{(i)}$, the convexity of the rest of the terms with respect to $w$, and the linearity of $U^{(i)}$ with respect to $w$. By the Karush-Kuhn-Tucker conditions for quadratic optimization, the solution must satisfy
	\begin{equation} \label{eq:kkt}
	\frac{1}{n}\sum_{i=1}^n {U_j^{(i)}}^\top U^{(i)} - \frac{1}{n}\sum_{i=1}^n {U_j^{(i)}}^\top U_j^{(i)} + \lambda \gamma_j =0, \quad\quad\text{for $j= 1,\dots,m$},
	\end{equation}
	where $\gamma_j = \mathrm{sign}(w_j)$ if $w_j\neq 0$ and $\gamma_j \in [-1,1]$ if $w_j=0$. This implies that $\hat \varepsilon$ defined in \ref{eq:epsilonhat} is unique. Note that the rank of $\hat J_{\hat \varepsilon} \equiv n^{-1}\sum_{i=1}^n {M^{(i)}_{\hat\varepsilon}}^\top M^{(i)}_{\hat\varepsilon}$ is at most $\min(m,np)$. Next we show that $\hat J_{\hat \varepsilon}$ has full rank. Otherwise, by the rank equality of the Gram matrix, there exists a subset $\tilde \varepsilon \subseteq \hat \varepsilon$, $|\tilde \varepsilon| \leq \min(m,np+1)$ and some $k\in\tilde \varepsilon$, such that $({U_k^{(1)}}^\top, \dots, {U_k^{(n)}}^\top)$ can be written as a linear combination of $({U_j^{(1)}}^\top, \dots, {U_j^{(n)}}^\top)$, for $j\in\tilde \varepsilon$ and $j\neq k$. Together with the Karush-Kuhn-Tucker condition, there exist  constants $a_j$, $j\in\tilde \varepsilon$ and $j\neq k$ (with $a_j \neq 0$ for some $j$) such that $U_k^{(i)}= \sum_{j\in\tilde\varepsilon, j\neq k} a_j U_j^{(i)}$, for all $i=1,\dots,n$, and 
	\begin{equation*} 
	\frac{1}{n} \sum_{i=1}^n {U_k^{(i)}}^\top U_k^{(i)} + \lambda \gamma_k  = \sum_{j\in\tilde\varepsilon, j\neq k} a_j \frac{1}{n} \sum_{i=1}^n {U_j^{(i)}}^\top U_j^{(i)} + \lambda \gamma_j.
	\end{equation*}
	This represents a linear system with $(np+1)$ equations but only $|\tilde \varepsilon|-1$ degrees of freedom, meaning that the rank of the $(np+1) \times |\tilde \varepsilon|$ matrix generated by columns $({U_j^{(1)}}^\top,\dots,{U_j^{(n)}}^\top, (1/n)\sum_{i=1}^n {U_j^{(i)}}^\top U_j^{(i)} + \lambda \gamma_j )$, $j \in \tilde\varepsilon$ is smaller or equal to $|\tilde \varepsilon|-1$. Since $|\tilde \varepsilon|\leq np+1$, we have that the $|\tilde\varepsilon|$ columns are linearly dependent. Under Condition \ref{cond1}, this event has zero probability, which is a contradiction. The statement in the theorem then follows by solving the Karush-Kuhn-Tucker equations in (\ref{eq:kkt}).
\end{proof}

\begin{lemma} \label{lemma1}
	Under Conditions \ref{cond1} and \ref{cond2}, for any $\lambda>0$,  $\sup_{\theta \in \Theta} \Vert \hat d_\lambda\{\theta,\hat w_\lambda(\theta)\} - d_\lambda\{\theta,\hat w_\lambda(\theta)\} \Vert_1  \rightarrow 0$ in probability, as $n \rightarrow \infty$.
\end{lemma}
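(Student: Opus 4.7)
The plan is to use the fact that the penalty term $\lambda\|w\|_1$ is identical in $\hat d_\lambda$ and $d_\lambda$, so that
\[
\hat d_\lambda(\theta,w)-d_\lambda(\theta,w)=\tfrac{1}{2}w^{\top}\Delta(\theta)w - w^{\top}\mathrm{diag}\{\Delta(\theta)\},\qquad \Delta(\theta):=\hat J(\theta)-J(\theta),
\]
and then evaluate at $w=\hat w_\lambda(\theta)$. Using $\|Aw\|_2 \leq \|A\|_{op}\|w\|_2$ and $|w^{\top}v|\leq\|w\|_2\|v\|_2$, we get
\[
\bigl|\hat d_\lambda\{\theta,\hat w_\lambda(\theta)\}-d_\lambda\{\theta,\hat w_\lambda(\theta)\}\bigr|
\leq \tfrac{1}{2}\|\hat w_\lambda(\theta)\|_2^{2}\,\|\Delta(\theta)\|_{op}
+\|\hat w_\lambda(\theta)\|_2\,\|\mathrm{diag}\{\Delta(\theta)\}\|_2.
\]
Since $m$ is fixed, Condition \ref{cond2} gives $\sup_{\theta\in\Theta}\|\Delta(\theta)\|_{op}\leq m \sup_{\theta,j,k}|\Delta(\theta)_{jk}|\to 0$ in probability, and likewise for the diagonal term. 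The result will follow once $\|\hat w_\lambda(\theta)\|_2$ is bounded uniformly in $\theta$ with probability tending to one.

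\textbf{Uniform bound on $\hat w_\lambda(\theta)$.} This is the main step. Since $\hat w_\lambda(\theta)$ minimizes $\hat d_\lambda(\theta,\cdot)$ and $\hat d_\lambda(\theta,0)=0$, we have
\[
\tfrac{1}{2}\hat w_\lambda(\theta)^{\top}\hat J(\theta)\hat w_\lambda(\theta)+\lambda\|\hat w_\lambda(\theta)\|_1 \leq \hat w_\lambda(\theta)^{\top}\mathrm{diag}\{\hat J(\theta)\}.
\]
By Condition \ref{cond2}, the smallest eigenvalue of $J(\theta)$ is bounded below by some constant $c>0$ uniformly in $\theta$. Combining Weyl's inequality with the bound $\|\Delta(\theta)\|_{op}\leq m\max_{j,k}|\Delta(\theta)_{jk}|$ and Condition \ref{cond2}, the smallest eigenvalue of $\hat J(\theta)$ is at least $c/2$ uniformly in $\theta$ on an event of probability tending to one. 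Dropping the nonnegative penalty term and applying Cauchy--Schwarz on the right-hand side yields, on that event,
\[
\tfrac{c}{4}\|\hat w_\lambda(\theta)\|_2^{2}\leq \|\hat w_\lambda(\theta)\|_2\,\|\mathrm{diag}\{\hat J(\theta)\}\|_2,
\]
so $\|\hat w_\lambda(\theta)\|_2\leq (4/c)\sqrt{m}\,\max_j|\hat J(\theta)_{jj}|$. Since Condition \ref{cond2} ensures the diagonal of $J(\theta)$ is bounded uniformly and $\hat J$ is uniformly close to $J$, this gives a deterministic uniform upper bound for $\|\hat w_\lambda(\theta)\|_2$ on the same event.

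\textbf{Conclusion and main obstacle.} Plugging the uniform bound from the previous paragraph into the initial display and using that both $\sup_{\theta}\|\Delta(\theta)\|_{op}$ and $\sup_{\theta}\|\mathrm{diag}\{\Delta(\theta)\}\|_2$ vanish in probability by Condition \ref{cond2}, we conclude $\sup_{\theta\in\Theta}\bigl|\hat d_\lambda\{\theta,\hat w_\lambda(\theta)\}-d_\lambda\{\theta,\hat w_\lambda(\theta)\}\bigr|\to 0$ in probability. The principal obstacle is the second step: because $\hat w_\lambda(\theta)$ is defined implicitly and depends on $\theta$, one cannot simply use a fixed compact search region, and the argument has to go through the optimality condition together with the uniform positive-definiteness of $J(\theta)$ (and hence of $\hat J(\theta)$ for large $n$) supplied by Condition \ref{cond2}. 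Once that uniform $L_2$ bound is in hand, the rest of the proof is routine matrix-norm bookkeeping.
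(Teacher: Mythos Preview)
Your proof is correct and follows the same overall scheme as the paper: write $\hat d_\lambda-d_\lambda$ as a quadratic-plus-linear form in $\Delta(\theta)=\hat J(\theta)-J(\theta)$, use Condition~\ref{cond2} to make $\Delta$ vanish uniformly, and reduce the problem to a uniform bound on $\|\hat w_\lambda(\theta)\|$.

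The one genuine difference is how you obtain that uniform bound. The paper invokes the explicit formula of Theorem~\ref{prop:uniqueness}, $\hat w_{\lambda,\hatEps}(\theta)=\{\hat J_{\hatEps}(\theta)\}^{-1}[\text{diag}\{\hat J_{\hatEps}(\theta)\}-\lambda\,\text{sign}\{\hat w_{\lambda,\hatEps}(\theta)\}]$, and then argues that $\hat J_{\hatEps}(\theta)$ and its inverse are uniformly bounded entrywise (via the min--max theorem and uniform convergence of $\hat J$). Your argument is instead variational: compare with $w=0$ to get $\tfrac12\hat w^\top\hat J\hat w+\lambda\|\hat w\|_1\le \hat w^\top\text{diag}(\hat J)$, then use Weyl's inequality to transfer the uniform eigenvalue lower bound from $J$ to $\hat J$, and finish with Cauchy--Schwarz. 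This is more elementary and self-contained: it does not rely on the KKT characterization or on controlling any matrix inverse, and it makes the role of the uniform eigenvalue bound in Condition~\ref{cond2} completely transparent. The paper's route, by contrast, ties the lemma back to Theorem~\ref{prop:uniqueness} and in principle gives a slightly sharper handle on $\hat w_\lambda$ through its closed form, though that extra precision is not needed here.
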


\begin{proof}[Proof of Lemma \ref{lemma1}]
   	By definition, it suffices to show that for all $j,k\geq 1$, $sup_{\theta_\in \Theta} | \hat J(\theta)_{jk} - J(\theta)_{jk} |\rightarrow 0 $ in probability as $n \rightarrow \infty$, and that $\|\hat w_\lambda(\theta)\|_1$ is uniformly bouned with probability tending to one. The first part is ensured by Condition \ref{cond2}. For the second part, by Theorem \ref{prop:uniqueness}, it suffices to show that $\hat J_{\hat \epsilon}(\theta)$ and $\hat J_{\hat \epsilon}(\theta)^{-1}$ are uniformly bounded entry-wise with probability tending to 1, which is guarenteed  by the uniform convergence of $\hat J(\theta)$ in probability, the boundedness of each element of $J(\theta)$ and the invertibility of $\hat J_{\hat \epsilon}(\theta)$ accoridng to the min-max theorem in linear algebra. 
\end{proof}

\begin{proof}[Proof of Theorem \ref{prop:proposition2}]
	Recall  that $\hat w_\lambda(\theta) = \mathrm{argmin}_w \hat d_\lambda(\theta,w)$ and $  w_\lambda(\theta) = \mathrm{argmin}_w   d_\lambda(\theta,w)$. Let $\xi$ be the smallest eigenvalue of $J(\theta)$. By definition, we have
\begin{align*}
	    &\sup_\theta\left[\frac{1}{2}\xi \| \hat w_\lambda(\theta) - w_\lambda(\theta) \|_2^2 \right]\\
	\leq & \sup_\theta\left[\frac{1}{2} \{\hat w_\lambda(\theta) - w_\lambda(\theta)\}^T J(\theta) \{\hat w_\lambda(\theta) - w_\lambda(\theta)\}\right]\\
	\leq & \sup_\theta\left[|d\{ \theta,\hat w_\lambda(\theta)  \} - d\{ \theta,w_\lambda(\theta)  \}|\right] \\
	\leq  & \sup_\theta\left[|d\{ \theta,\hat w_\lambda(\theta)  \} - \hat d\{ \theta,\hat w_\lambda(\theta)  \}|\right] + \sup_\theta\left[|\hat d\{ \theta,  w_\lambda(\theta)  \} - d\{ \theta,w_\lambda(\theta)  \}|\right]
\end{align*}	
 where the second inequality is due to the Karush-Kuhn-Tucker conditions of quadratic optimization, and the second last inequality is due to that $\hat w_\lambda(\theta)$ and $w_\lambda(\theta)$ are the corresponding minimizers.	By Lemma \ref{lemma1}, the first term of the last inequality converges to zero in probability, and the same holds for the second term. Under Condition \ref{cond2}, $\xi>0$. Since $m$ is fixed, it concludes the proof. 
\end{proof}

\begin{lemma} \label{lemma2}
	Under Conditions \ref{cond1} and \ref{cond2}, $w_\lambda(\theta)$ is continuous with respect to both $\lambda$ and $\theta$ on $\lambda \geq 0$ and $\theta \in \Theta$. 
\end{lemma}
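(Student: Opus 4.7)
The plan is to show that the unique minimizer $w_\lambda(\theta)$ of the population criterion $d_\lambda(\theta, w)$ depends continuously on $(\lambda, \theta)$ by combining strict convexity with a standard Berge-type argmin stability argument.

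First I would invoke Condition \ref{cond2}: the smallest eigenvalue of $J(\theta)$ is bounded below by some $\xi > 0$ uniformly on $\Theta$, so the quadratic part of $d_\lambda(\theta, w)$ is strictly convex in $w$, and adding the (convex) $L_1$ penalty preserves strict convexity. Hence $w_\lambda(\theta)$ is uniquely defined for every $(\lambda, \theta) \in [0,\infty) \times \Theta$. Condition \ref{cond2} also gives joint continuity of $(\lambda, \theta, w) \mapsto d_\lambda(\theta, w)$, since the entries of $J(\theta)$ are continuous in $\theta$ and the penalty term is continuous in $(\lambda, w)$.

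Next I would establish a local uniform boundedness of the minimizers. For any point $(\lambda_0, \theta_0)$, plugging $w=0$ into the objective yields $d_\lambda(\theta, w_\lambda(\theta)) \leq d_\lambda(\theta, 0) = 0$, hence
\begin{equation*}
\tfrac{\xi}{2} \|w_\lambda(\theta)\|_2^2 \;\leq\; w_\lambda(\theta)^\top \mathrm{diag}\{J(\theta)\} \;\leq\; C\,\|w_\lambda(\theta)\|_2,
\end{equation*}
where $C = \sup_{\theta \in \Theta}\|\mathrm{diag}\{J(\theta)\}\|_2 < \infty$ by the boundedness of each entry of $J(\theta)$ asserted in Condition \ref{cond2}. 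Thus $\|w_\lambda(\theta)\|_2 \leq 2C/\xi$ uniformly in $(\lambda, \theta)$.

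Finally I would run the Berge-type argument. Let $(\lambda_n, \theta_n) \to (\lambda_0, \theta_0)$ and set $w_n = w_{\lambda_n}(\theta_n)$. By the uniform bound, every subsequence of $(w_n)$ has a further convergent subsequence $w_{n_k} \to w^\ast$. Optimality of $w_{n_k}$ gives $d_{\lambda_{n_k}}(\theta_{n_k}, w_{n_k}) \leq d_{\lambda_{n_k}}(\theta_{n_k}, w)$ for every $w$; passing to the limit via joint continuity yields $d_{\lambda_0}(\theta_0, w^\ast) \leq d_{\lambda_0}(\theta_0, w)$, so $w^\ast = w_{\lambda_0}(\theta_0)$ by uniqueness. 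Since every subsequential limit equals the same point, the full sequence converges, establishing joint continuity. The main obstacle I expect is the uniform boundedness step: one must ensure that the minimizer does not escape to infinity when $\lambda$ approaches $0$, which is precisely why the uniform lower bound on the eigenvalues of $J(\theta)$ in Condition \ref{cond2} is essential; the remaining pieces are routine.
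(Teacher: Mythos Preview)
Your proof is correct. It differs from the paper's argument in technique, though both rest on the same core fact (strict convexity of $d_\lambda(\theta,\cdot)$ from the uniform lower eigenvalue bound in Condition~\ref{cond2}). The paper proceeds by a direct $\varepsilon$--$\delta$ argument: fixing $\theta_1$ and a radius $c>0$, it uses strict convexity to separate the optimal value $d_\lambda\{\theta_1,w_\lambda(\theta_1)\}$ from the infimum of $d_\lambda(\theta_1,\cdot)$ over the sphere $\{\|w-w_\lambda(\theta_1)\|_1=c\}$, then perturbs $\theta$ using local uniform continuity of $d_\lambda$ to keep that separation, forcing the new minimizer inside the ball. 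You instead obtain a global a~priori bound $\|w_\lambda(\theta)\|_2\le 2C/\xi$ and run a Berge-type subsequence argument, handling $(\lambda,\theta)$ jointly. Your route is slightly more streamlined and yields, as a by-product, an explicit uniform bound on the minimizers that the paper establishes separately (for the empirical version) in the proof of Lemma~\ref{lemma1}; the paper's route avoids any compactness step and is marginally more self-contained. Neither proof actually uses Condition~\ref{cond1}, which is consistent with your presentation.
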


\begin{proof}[Proof of Lemma \ref{lemma2}]  
	For simplicity, here we show the continuity of $w_\lambda(\theta)$ with respect to $\theta$. The proof for continuity with respect to $\lambda$ is the same and thus omitted. For any $c>0$ and $\theta_1 \in \Theta$, it suffices to show that there exist some $\delta >0$, such that $\| \theta-\theta_1  \| _1 < \delta$ implies $\|w_\lambda(\theta)-w_\lambda(\theta_1)\| < c$. To find $\delta$, recall that $w_\lambda(\theta )$ is the minimizer of $d_\lambda(\theta,w)$ defined in \ref{eq:criterion2}. Under Condition \ref{cond2}, $d_\lambda(\theta,w)$ is strictly convex with respect to $w$. Thus, there exists $c_1 = \inf_{ \{w:\| w-w_\lambda(\theta_1) \|_1=c\} } d_\lambda(\theta_1,w)  > d_\lambda\{\theta_1,w_\lambda(\theta_1)\}$.   Moreover, $d_\lambda(\theta,w)$ is uniformly continuous on the closed domain $\{w\in \mathbb R^m: \| w-w_\lambda(\theta_1) \|_1 \leq c  \}\times \{ \theta \in \mathbb R^p: \| \theta -\theta_1 \|_1 \leq \delta_1 \}$ for some $\delta_1>0$. Thus we can find $\delta \in (0,\delta_1)$ such that for any $\{ \theta: \|\theta-\theta_1 \|_1 < \delta \}$ and $\{w : \| w-w_\lambda(\theta_1) \|_1 \leq c \}$, $\| d_\lambda(\theta,w) - d_\lambda(\theta_1,w)  \|_1 < \{c_1 - d_\lambda(\theta_1,w) \}/2$. This implies that when $\|\theta-\theta_1 \|_1 < \delta$, $d_\lambda(\theta,w) > d_\lambda\{\theta,w_\lambda(\theta_1)\}$ for all $\{w:\| w-w_\lambda(\theta_1) \|_1=c\}$. Since $d_\lambda(\theta,w)$ is strictly convex, we have $\|w_\lambda(\theta)-w_\lambda(\theta_1)\| < c$.
\end{proof}

\begin{proof}[Proof of Theorem \ref{thm:consistent}]
	Note that $\hat \theta_\lambda$ and $\theta^\ast$ are the solutions of the estimating equations  $U(\theta, w;  X^{(1)}, \dots, X^{(n)})/n= 0$ and $E\{U(\theta,w;X)\}=0$, with $w$ replaced by $\hat w _\lambda(\tilde \theta)$ and $w_\lambda^\ast$, respectively.  
	By Theorem \ref{prop:proposition2} and Lemma \ref{lemma2},  we have $   \| \hat w_\lambda(\tilde \theta) - w_\lambda^\ast \|_1 \to 0$ in probability, as $n \to \infty$. Under Condition \ref{cond2}, $E\{U_j(\theta,X)\}$ is bounded. Moreover, under Condition \ref{cond3}, each sub-likelihood score $\sum_{i=1}^n U_j(\theta;X^{(i)})/n \to E\{ U_j(\theta;X) \}$ uniformly with probability tending to one. Since  $ U\{\theta, \hat w _\lambda(\tilde \theta); X^{(1)},\dots,X^{(n)} \}/n$ and $E \{U(\theta, w_\lambda^\ast; X)\}$ are the product of $\hat w_\lambda(\tilde \theta)$, $w_\lambda^\ast$ and the sub-likelihood scores, we have
	\begin{align}\label{eq:unif}
	\sup_{\theta \in  \Theta} \left\| \frac{1}{n} U\{\theta, \hat w _\lambda(\tilde \theta); X^{(1)},\dots,X^{(n)} \}  -   E \{U(\theta, w_\lambda^\ast; X)\} \right \|_1 \to 0,
	\end{align} in probability as $n \to \infty$.

Moreover, by Condition \ref{cond3}, $\inf_{\{ \theta: \| \theta-\theta^\ast \|_1\geq c\} } \| E\{U(\theta,w_\lambda^\ast;X)\} \|_1 >0$ for any constant $c>0$. Thus, for any $c>0$, there exists a $\delta>0$ such that the event $ \| \hat \theta_\lambda-\theta^\ast \| \|_1 \geq c$ implies the event $\| E\{U( \hat \theta_\lambda, w_\lambda^\ast;X)\} \|_1 >\delta$. We have
\begin{align*}
	 & \Pr\{\| \hat \theta_\lambda-\theta^\ast \| \|_1  \geq  c \} \\ \leq & \Pr\{\| E\{U( \hat \theta_\lambda, w_\lambda^\ast;X)\} \|_1 >\delta \} \\ 
	 = & \Pr\{\| E\{U( \hat \theta_\lambda, w_\lambda^\ast;X)\} -  \frac{1}{n} U\{\hat \theta_\lambda, \hat w _\lambda(\tilde \theta); X^{(1)},\dots,X^{(n)} \} \|_1 >\delta \}\\   \rightarrow & 0,
\end{align*} 	
as $n \to \infty$, where the equality is due to that $U\{\hat \theta_\lambda, \hat w _\lambda(\tilde \theta); X^{(1)},\dots,X^{(n)} \} =0 $ and the last line is  implied by (\ref{eq:unif}).  This concludes the proof.
\end{proof}


\begin{proof}[Proof of Theorem \ref{thm:normality}]
	  Note that $U\{ \hat \theta_\lambda , \hat w_\lambda(\tilde\theta );  X^{(1)}, \dots, X^{(n)}\}/n= 0$, and that $\| \hat \theta_\lambda - \theta^\ast \|_1 \to 0$ in probability as $n \to \infty$. By Condition  \ref{cond4} and applying the law of large number to the remainder, we obtain the following expansion of $j$th sub-likelihood score at $\theta^\ast$,
	  \begin{align*}
	  & \dfrac{1}{n}\sum_{i=1}^n U_j(\hat \theta_\lambda; X^{(i)})  -  \dfrac{1}{n} \sum_{i=1}^n U_j(\theta^\ast; X^{(i)}) \\  = &  \dfrac{1}{n} \sum_{i=1}^n \nabla U_j(\theta^\ast; X^{(i)}) (\hat \theta_\lambda -\theta^\ast) + o_p(\| \hat \theta_\lambda - \theta^\ast \|_1 1_p) ,
	  \end{align*}
	  where $1_p$ is the $p$-dimensional vector with elements equal to one. Note that under Condition \ref{cond2}, $\|\hat w _\lambda(\theta)\|_1$ is uniformly bounded (also see the proof of Lemma \ref{lemma1}). Taking the entry-wise product of the empirical composition rule and the sub-liekilhood scores implies 
	  \begin{align}  
	  	&\sqrt n \frac{1}{n} U\{ \theta^\ast , \hat w_\lambda( \tilde \theta );  X^{(1)}, \dots, X^{(n)}\} \notag \\ = &- \sum_{j=1}^m \hat w_\lambda(\tilde \theta)_j \frac{1}{n} \sum_{i=1}^n \nabla U_j(\theta^\ast; X^{(i)})\left\{ \sqrt n  (\hat \theta_\lambda - \theta^\ast)\right\} +  o_p(\sqrt n \| \hat \theta_\lambda - \theta^\ast \|_1 1_p)  , \label{eq:clt}
	  \end{align}
where $\hat w_\lambda(\tilde \theta)_j$ is the $j$th element of $\hat w_\lambda(\tilde \theta)$. Note that by Theorem \ref{prop:proposition2} and  Lemma \ref{lemma2}, $\| \hat w_\lambda(\tilde \theta) - w_\lambda^\ast \|_1 \to 0$ in probability.   Under Condition \ref{cond4}, by the Central Limit Theorem and Slustky's Theorem, the left-hand side of (\ref{eq:clt}) converges in distribution to a multivariate normal random vector with mean zero and covariance $K(\theta^\ast,w_\lambda^\ast) = \mathrm{cov}\{U( \theta^\ast,w_\lambda^\ast ;X)\}$ defined in (\ref{eq:Godambe}). The $p \times p$ matrix $-\sum_{j=1}^m \hat w_\lambda(\tilde \theta)_j   \sum_{i=1}^n \nabla U_j(\theta^\ast; X^{(i)})/n$ in right hand side of (\ref{eq:clt}) converges in probability to $H(\theta^\ast ,w_\lambda^\ast)$ defined in (\ref{eq:Godambe}) by the Law of Large Numbers and Slustky's Theorem. The invertibility of $H(\theta^\ast,w_\lambda^\ast)$ implies that $\hat \theta_\lambda$ is root-$n$ consistent. Re-organizing (\ref{eq:clt}) implies the desired  result. 
\end{proof}


\bibhang=1.7pc
\bibsep=2pt
\fontsize{9}{14pt plus.8pt minus .6pt}\selectfont
\renewcommand\bibname{\large \bf References}
\expandafter\ifx\csname
natexlab\endcsname\relax\def\natexlab#1{#1}\fi
\expandafter\ifx\csname url\endcsname\relax
  \def\url#1{\texttt{#1}}\fi
\expandafter\ifx\csname urlprefix\endcsname\relax\def\urlprefix{URL}\fi

  \bibliographystyle{chicago}      
  \bibliography{biblio}   

\vskip .65cm
\noindent
Zhendong Huang, School of Mathematics and Statistics, University of Melbourne, Peter Hall Building,
3010 Parkville, Australia
\vskip 2pt
\noindent
E-mail: (huang.z@unimelb.edu.au)
\vskip 2pt

\noindent
Davide Ferrari, Faculty of Economics and Management, University of Bolzano, Piazza Universit\`{a} 1, Bolzano,
39100, Italy
\vskip 2pt
\noindent
E-mail: (davferrari@unibz.it)

\end{document}